\def\eqref#1{Eq.~(\ref{#1})}
\def\1{\bm{1}}
\DeclareMathAlphabet{\mathsfit}{\encodingdefault}{\sfdefault}{m}{sl}
\SetMathAlphabet{\mathsfit}{bold}{\encodingdefault}{\sfdefault}{bx}{n}
\newtheorem{theorem}{Theorem}%  meant for continuous numbers
\newtheorem{proposition}[theorem]{Proposition}% 
\newtheorem{lemma}{Lemma}
\newtheorem{remark}{Remark}
\def\BibTeX{{\rm B\kern-.05em{\sc i\kern-.025em b}\kern-.08em
    T\kern-.1667em\lower.7ex\hbox{E}\kern-.125emX}}
\begin{document}
\title{Trust Region Bayesian Optimization of Annealing Schedules on a Quantum Annealer}
\author{Seon-Geun Jeong, Mai Dinh Cong, Minh-Duong~Nguyen, Xuan Tung~Nguyen, \\Quoc-Viet Pham, and Won-Joo Hwang

\thanks{Seon-Geun Jeong and Mai Dinh Cong are with the Department of Information Convergence Engineering, Pusan National University, Busan 46241, South Korea (e-mail: wjdtjsrms11@pusan.ac.kr, cong.md@pusan.ac.kr).}%
\thanks{Minh-Duong Nguyen is with the Department of Intelligent Computing and Data Science, VinUniversity, Hanoi, Vietnam (e-mail: duong.nm2@vinuni.edu.vn).}
\thanks{Xuan Tung Nguyen is with the Faculty of Interdisciplinary Digital Technology,
PHENIKAA University, Yen Nghia, Ha Dong, Hanoi 10000, Viet Nam (email: tung.nguyenxuan@phenikaa-uni.edu.vn).}
\thanks{Quoc-Viet Pham is with the School of Computer Science and Statistics, Trinity College Dublin, The University of Dublin, Dublin 2, D02 PN40, Ireland (e-mail: viet.pham@tcd.ie).}
\thanks{Won-Joo Hwang (corresponding author) is with the School of Computer Science and Engineering, Center for Artificial Intelligence Research, Pusan National University, Busan 46241, South Korea (e-mail: wjhwang@pusan.ac.kr).}}%

% \markboth{Journal of \LaTeX\ Class Files,~Vol.~18, No.~9, September~2020}%
% {How to Use the IEEEtran \LaTeX \ Templates}

\maketitle

\begin{abstract}
Quantum annealing (QA) is a practical model of adiabatic quantum computation, already realized on hardware and considered promising for combinatorial optimization. However, its performance is critically dependent on the annealing schedule due to hardware decoherence and noise. Designing schedules that account for such limitations remains a significant challenge. We propose a trust region Bayesian optimization (TuRBO) framework that jointly tunes annealing time and Fourier-parameterized schedules. Given a fixed embedding on a quantum processing unit (QPU), the framework employs Gaussian process surrogates with expected improvement to balance exploration and exploitation, while trust region updates refine the search around promising candidates. The framework further incorporates mechanisms to manage QPU runtime and enforce feasibility under hardware constraints efficiently. Simulation studies demonstrate that TuRBO consistently identifies schedules that outperform random and greedy search in terms of energy, feasible solution probability, and chain break fraction. These results highlight TuRBO as a resource-efficient and scalable strategy for annealing schedule design, offering improved QA performance in noisy intermediate-scale quantum regimes and extensibility to industrial optimization tasks.
\end{abstract}

\begin{IEEEkeywords}
Annealing schedule, Bayesian optimization, Fourier-parameterized annealing schedule, quantum annealing
\end{IEEEkeywords}

\section{Introduction}
\IEEEPARstart{A}{diabatic} quantum computation (AQC) provides a theoretical framework for solving optimization problems by evolving a quantum system from the ground state of an initial Hamiltonian to that of a problem Hamiltonian under slow, continuous changes~\cite{albash2018adiabatic}. According to the adiabatic theorem~\cite{kato1950adiabatic}, if the evolution is sufficiently slow relative to the minimum spectral gap, i.e., the smallest energy difference between the ground state and the first excited state, the system remains in the instantaneous ground state, yielding the desired solution. While AQC is polynomially equivalent to the circuit model, it offers an alternative perspective on quantum algorithms and has inspired a range of heuristic methods for practical implementations~\cite{yang2020optimizing}.

Quantum annealing (QA) represents a practical realization of AQC and has been implemented on hardware platforms such as the D-Wave systems~\cite{dwave}. In QA, the annealing schedule, which governs the interpolation between the initial and problem Hamiltonians, plays a critical role in determining performance~\cite{yang2020optimizing, chen2022optimizing}. Similarly, in gate-based approaches such as the quantum approximate optimization algorithm, the variational parameters can be interpreted as a discretized schedule~\cite{brady2021optimal}. Both paradigms underscore the importance of schedule design, as the choice of path significantly affects success probability, solution quality, and robustness to noise.

However, designing annealing schedules that perform well on real devices remains a major challenge. Hardware limitations, such as decoherence, control noise, and finite runtime budgets, often prevent the use of schedules derived purely from theoretical considerations~\cite{chen2022optimizing}. Prior studies have explored schedule optimization using reinforcement learning and Monte Carlo tree search (MCTS)~\cite{chen2022optimizing}, or gradient-free learning approaches such as differential evolution~\cite{yang2020optimizing}. While these efforts have demonstrated promising results, they are largely conducted in simulation and often do not fully incorporate device-level constraints. This gap highlights the need for systematic and resource-efficient approaches that directly account for practical hardware conditions.

\begin{comment}
Motivated by the effectiveness of Bayesian optimization (BO) for global optimization and hyperparameter tuning in machine learning~\cite{wu2019hyperparameter}, recent work has applied the standard BO to QA schedule design, showing that it can discover high-quality paths with minimal user intervention~\cite{finvzgar2024designing}. Building on these advances and on Fourier-based parameterizations of annealing schedules~\cite{yang2020optimizing, chen2022optimizing}, we introduce a unified framework based on trust region Bayesian optimization (TuRBO), a scalable variant of BO, to simultaneously optimize the total annealing time and the Fourier coefficients of the schedule under realistic hardware constraints.    
\end{comment}

Motivated by the effectiveness of Bayesian optimization (BO) for global optimization and hyperparameter tuning in machine learning~\cite{wu2019hyperparameter}, recent works have applied the standard BO to QA schedule design, showing that it can discover high-quality paths with minimal user intervention~\cite{finvzgar2024designing}. However, standard BO is most effective when the number of tunable parameters is small, and it often scales poorly when the dimensionality grows or when the objective is noisy and expensive to evaluate~\cite{eriksson2019scalable}. This limitation becomes particularly critical in annealing schedule design, where Fourier parameterization introduces multiple coefficients in addition to the annealing time, expanding the search space beyond the practical capacity of standard BO~\cite{yang2020optimizing, chen2022optimizing}. To address this, we adopt trust region Bayesian optimization (TuRBO)~\cite{eriksson2019scalable}, which restricts exploration to adaptive local regions and thereby offers scalability, robustness to noise, and efficient use of limited quantum processing unit (QPU) budgets.

In our approach, the annealing path is explicitly expressed in a truncated Fourier basis, and TuRBO is employed to jointly optimize the total runtime and the Fourier coefficients. A Gaussian process (GP) surrogate with expected improvement balances exploration and exploitation, while a trust region (TR) mechanism refines the search around promising candidates. To ensure practicality on QPUs, the framework further integrates adaptive readout scaling, runtime budget guards, and feasibility checks. This design enables schedule optimization directly under noisy intermediate-scale quantum (NISQ) conditions without excessive sampling overhead.    

\begin{comment}
In our approach, the annealing path is expressed in a truncated Fourier basis, and TuRBO adaptively explores both the total runtime and the Fourier coefficients. A Gaussian process surrogate with expected improvement balances exploration and exploitation, while a trust-region mechanism refines the search around promising candidates. To ensure practicality on quantum processing units (QPUs), the framework integrates adaptive readout scaling, runtime budget guards, and feasibility checks via constraints or penalties. This design enables schedule optimization directly under noisy intermediate-scale quantum (NISQ) conditions without excessive sampling overhead.    
\end{comment}

We further demonstrate the effectiveness of our method on the traveling salesman problem (TSP) benchmark, a canonical NP-hard problem, and validate the framework on the D-Wave quantum annealer. To the best of our knowledge, this represents the first application of TuRBO for Fourier-parameterized schedule design executed on a practical QA device. Experimental results show that our optimized schedules consistently outperform random search (RS) and greedy search (GS) baselines, yielding higher success probability, lower chain break fraction (CBF), and improved energy quality. These findings highlight that the proposed framework not only bridges the gap between theory and practice but also provides a robust and scalable path toward enhancing QA performance on real hardware.
The main contributions of this study are summarized as follows:
\begin{itemize}
    \item We propose a TuRBO-based framework that jointly optimizes annealing time and Fourier-parameterized schedules on a real quantum annealer.
    \item Hardware-aware optimization mechanisms are designed, including adaptive readout scaling, runtime budget control, and feasibility checks, to ensure efficient and reliable execution on the real quantum annealers.
    \item Extensive empirical validation on TSP instances demonstrates substantial improvements in success probability, chain-break fraction, and solution quality compared to RS and GS baselines.
    \item We empirically compare the proposed QA framework with classical baselines such as simulated annealing (SA), genetic algorithm (GA), and CPLEX optimization, demonstrating clear runtime advantages.
    \item We incorporate the integrated control error (ICE) model into our analysis to assess the scalability limitations of QA under realistic noise, providing experimental validation of its impact.
    \item To the best of our knowledge, this study presents the first real-hardware demonstration of TuRBO-driven Fourier-based schedule optimization on the D-Wave quantum annealer, bridging theoretical advances and practical deployment.
\end{itemize}

The remainder of this paper is organized as follows. 
Section~\ref{sec:related} reviews related work on annealing schedule design. 
Section~\ref{sec:pre} presents the preliminaries of AQC, QA, and annealing schedule. 
Section~\ref{sec:pro} formulates the TSP, derives its quadratic unconstrained binary optimization (QUBO) representation, and defines the annealing schedule optimization.
Section~\ref{sec:met} details the proposed TuRBO framework, including the surrogate model, acquisition strategy, and QPU-aware mechanisms. 
Section~\ref{sec:Exp} reports experimental validation on the TSP benchmark and real-hardware implementation on the D-Wave quantum annealer. 
Section~\ref{sec:discussion} discusses scalability limitations such as noise effects and highlights directions for practical deployment.
Finally, Section~\ref{sec:Con} concludes the paper and outlines future research directions.

\section{Related Works}
\label{sec:related}

\subsection{Annealing Schedule Optimization}
The performance of QA is strongly affected by the annealing schedule, which controls the interpolation between the initial and problem Hamiltonians. Brif \textit{et al.}~\cite{brif2014exploring} applied quantum optimal control theory to design adiabatic trajectories with composite objectives, showing that multiple control functions can enlarge spectral gaps and improve fidelity. Yang \textit{et al.}~\cite{yang2020optimizing} further employed a differential evolution algorithm combined with the chopped random basis technique, where schedules are expressed in truncated Fourier series, to identify adiabatic pathways formulated as a multi-objective optimization problem. Both studies highlight the potential of control-theoretic methods but remain limited to small-scale numerical simulations without hardware validation. Chen \textit{et al.}~\cite{chen2022optimizing} introduced a reinforcement learning approach using MCTS with neural networks, demonstrating the ability to discover nontrivial schedules for satisfiability problems. Fin\v{z}gar \textit{et al.}~\cite{finvzgar2024designing} applied BO to schedule design and reported significant fidelity improvements over linear baselines. However, their analysis was limited to theoretical settings without incorporating hardware-level constraints, and the annealing time was fixed rather than jointly optimized with the schedule.
Khezri \textit{et al.}~\cite{khezri2022customized} investigated customized schedules in superconducting flux-qubit circuits by mapping circuit-level flux biases to effective Pauli schedules, providing a simulation-based study of hardware constraints. Their work offered device-level customization but did not demonstrate algorithmic optimization or validation on real hardware.

\subsection{Quantum Annealing Applications}
Le \textit{et al.}~\cite{le2023quantum} investigated the selective TSP with a novel QUBO formulation on the D-Wave 2000Q. The annealing was performed under fixed schedules with predetermined durations, and no justification was provided for the choice of time or schedule shape.  
Ikeda \textit{et al.}~\cite{ikeda2019application} addressed the nurse scheduling problem on the D-Wave 2000Q and explored reverse annealing as a means of refining solutions. While this introduced a reverse annealing mechanism, the method was applied heuristically without a systematic rationale for schedule parameters.  
Carugno \textit{et al.}~\cite{carugno2022evaluating} studied the job shop scheduling problem on the D-Wave Advantage, focusing on embedding and chain-break mitigation. Their experiments highlighted the influence of annealing duration and the use of reverse annealing, yet these controls were selected heuristically rather than optimized.  
Pérez-Armas \textit{et al.}~\cite{perez2024solving} examined the resource-constrained project scheduling problem on the D-Wave Advantage, introducing benchmarking metrics such as time-to-target and Q-score. Although annealing time and pause effects were varied, the schedules were still manually chosen without principled design.  
Authors in~\cite{10907925, ohyama2021intelligent} formulated the wireless communication problem as a QUBO and evaluated it using simulations and hybrid QA solvers. While acknowledging that annealing schedule choice can affect performance, the study employed default settings without investigating alternative shapes or durations.  

In contrast to these prior works, this study directly addresses the problem of annealing schedule design under realistic hardware constraints. Whereas earlier approaches either fixed the annealing time, adopted default schedules, or applied mechanisms such as reverse annealing in a heuristic manner, our framework introduces a principled optimization strategy. Specifically, we employ TuRBO to jointly tune the annealing time and Fourier-parameterized schedule, enabling systematic exploration of nontrivial annealing shapes beyond linear or manually selected forms. By incorporating QPU-aware mechanisms such as adaptive readout scaling, runtime budget guards, and feasibility checks, the proposed method is tailored for NISQ devices. Furthermore, to the best of our knowledge, this work presents the first real-hardware demonstration of TuRBO-driven schedule optimization on the D-Wave quantum annealer, validated on TSP instances. These contributions establish a novel and practical pathway for bridging theoretical schedule optimization and applied QA on current hardware.

\section{Preliminaries}\label{sec:pre}
\subsection{Adiabatic Quantum Computation}
AQC is a model of quantum computing grounded in the adiabatic theorem~\cite{kato1950adiabatic}. 
AQC encodes a computational problem into a problem Hamiltonian $H_P$, whose ground state corresponds to the optimal solution. 
The system is initialized in the ground state of an initial Hamiltonian $H_0$, and the total Hamiltonian is gradually transformed as
\begin{equation}
    H(t) = (1-s(t)) H_0 + s(t) H_P, \quad s(t) \in [0,1],
\end{equation}
where $s(t)$ is a monotonic scheduling function of time $t$. 

According to the adiabatic theorem, the success of the evolution depends on the minimum spectral gap as follows:
\begin{equation}
    \Delta_{\min} = \min_{s \in [0,1]} \left( E_1(s) - E_0(s) \right),
\end{equation}
where $E_0(s)$ and $E_1(s)$ denote the instantaneous ground and first excited state energies. 
The adiabatic condition can be expressed as
\begin{equation}
    T \gg \frac{\max_s \left| \langle \psi_1(s) \big| \tfrac{dH(s)}{ds} \big| \psi_0(s) \rangle \right|}{\Delta_{\min}^2},
    \label{eq:adiabatic-runtime}
\end{equation}
where $T$ is the total runtime and $\ket{\psi_0(s)}$, $\ket{\psi_1(s)}$ are the instantaneous eigenstates of $H(s)$. 
This condition highlights that smaller spectral gaps require slower evolution (larger $T$) to ensure the system remains in its ground state throughout the computation. 

\begin{remark}
%If the runtime $T$ satisfies the adiabatic condition in~\eqref{eq:adiabatic-runtime}, the final state $\ket{\psi(T)}$ will remain arbitrarily close to the ground state of $H_P$, thereby encoding the solution to the optimization problem. 
In practice, the numerator term in~\eqref{eq:adiabatic-runtime} can be upper-bounded by a polynomial in the problem size $N$, so the overall complexity is determined primarily by the scaling of the minimum spectral gap $\Delta_{\min}$. When $\Delta_{\min}$ decreases only as $1/\mathrm{poly}(N)$, the required runtime scales polynomially with $N$ and AQC can, in principle, be efficient. By contrast, if $\Delta_{\min}$ closes exponentially, for example $\Delta_{\min} \sim e^{-\alpha N}$ with $\alpha>0$, the runtime becomes exponential, implying that AQC offers no quantum speedup~\cite{albash2018adiabatic}. This underscores the critical role of spectral gap analysis not only in evaluating the theoretical power of AQC, but also in guiding the design of annealing schedules, since regions with small gaps require slower evolution while regions with larger gaps can be traversed more quickly.
\end{remark}

\begin{comment}
If the runtime $T$ satisfies the adiabatic condition of~Eq.~\eqref{eq:adiabatic-runtime}, the final state $\ket{\psi(T)}$ will remain close to the ground state of $H_P$, thereby encoding the solution to the optimization problem. In practice, the numerator term in Eq.~\eqref{eq:adiabatic-runtime} can be bounded by a polynomial in the problem size $n$, meaning that the complexity is governed primarily by the scaling of the minimum gap $\Delta_{\min}$. When $\Delta_{\min}$ decreases no faster than $1/\mathrm{poly}(n)$, the runtime is polynomial, while exponentially small gaps imply exponential runtime~\cite{albash2018adiabatic}.    
\end{comment}

\begin{comment}
\subsection{Adiabatic Quantum Computation}
AQC is a paradigm of quantum computing based on the adiabatic theorem~\cite{kato1950adiabatic}. 
AQC encodes a computational problem into a final Hamiltonian $H_P$, whose ground state represents the optimal solution. 
The system is initialized in the ground state of a simple Hamiltonian $H_0$, and the total Hamiltonian is interpolated as follows:
\begin{equation}
    H(t) = (1-s(t)) H_0 + s(t) H_P, \quad s(t) \in [0,1],
\end{equation}
where $s(t)$ is a monotonic function of time $t$. According to the adiabatic theorem, if the evolution is sufficiently slow compared to the inverse square of the minimum spectral gap, the system remains in its instantaneous ground state throughout the process, yielding the desired solution.    
\end{comment}

\subsection{Quantum Annealing}
QA is a practical realization of AQC designed for solving combinatorial optimization problems~\cite{kadowaki1998quantum}. 
In QA, the target problem is formulated either as an Ising Hamiltonian
\begin{equation}
    H_P = \sum_{i=1}^N h_i \sigma_i^z + \sum_{\substack{i,j=1 \\ i<j}}^{N} J_{ij} \sigma_i^z \sigma_j^z,
\end{equation}
or equivalently as a QUBO model
\begin{equation}
    f_Q(\boldsymbol{x}) = \boldsymbol{x}^T Q \boldsymbol{x}, \quad \boldsymbol{x} \in \{0,1\}^N.
\end{equation}

\subsubsection{Adiabatic Evolution}
The computation begins with an initial Hamiltonian $H_0$, usually chosen as the transverse-field Hamiltonian~\cite{farhi2002quantum}:
\begin{equation}
    H_0 = -\sum_{i=1}^N \sigma_i^x,
\end{equation}
whose ground state is an easily prepared uniform superposition of all computational basis states. 
The system then evolves according to a time-dependent Hamiltonian
\begin{equation}
    H(s) = A(s) H_0 + B(s) H_P, \quad s \in [0,1],
\end{equation}
where $A(s)$ decreases from its maximum value to zero, while $B(s)$ increases from zero to its maximum value. 
This interpolation ensures that $H(0)=H_0$ and $H(1)=H_P$, ideally transforming the system from the ground state of $H_0$ to the ground state of $H_P$.

\subsubsection{Annealing Schedules}
The function $s(t)$, known as the annealing schedule, determines how the interpolation parameter evolves over the annealing time $T$. 
A common choice is the linear schedule,
\begin{equation}
    s(t) = \frac{t}{T}, \quad t \in [0,T],
\end{equation}
but more general non-linear schedules can be employed to slow down the evolution near small spectral gaps or to accelerate it in regions where the energy spectrum is wide. 
Variants such as pauses, quenches, or reverse annealing have also been introduced in practice, but most applications still rely on fixed or heuristic schedules without principled optimization.

\subsubsection{D-Wave Quantum Annealer}
Commercial implementations of QA are provided by D-Wave Systems~\cite{dwave}, which employ superconducting flux qubits arranged in sparse topologies such as Chimera, Pegasus, and Zephyr. 
Problems expressed in QUBO or Ising form are first converted into a logical graph and then mapped onto the hardware using minor embedding, which introduces chains of physical qubits to represent logical variables. 
The annealing process is governed by the schedules $A(s)$ and $B(s)$, followed by a measurement of qubit states to obtain classical solutions. 
Figure~\ref{fig:dwave-framework} illustrates the overall workflow of the D-Wave QA process, from QUBO formulation to hardware embedding, annealing, and final readout.

\begin{figure}[h]
    \centering
    \includegraphics[width=0.5\textwidth]{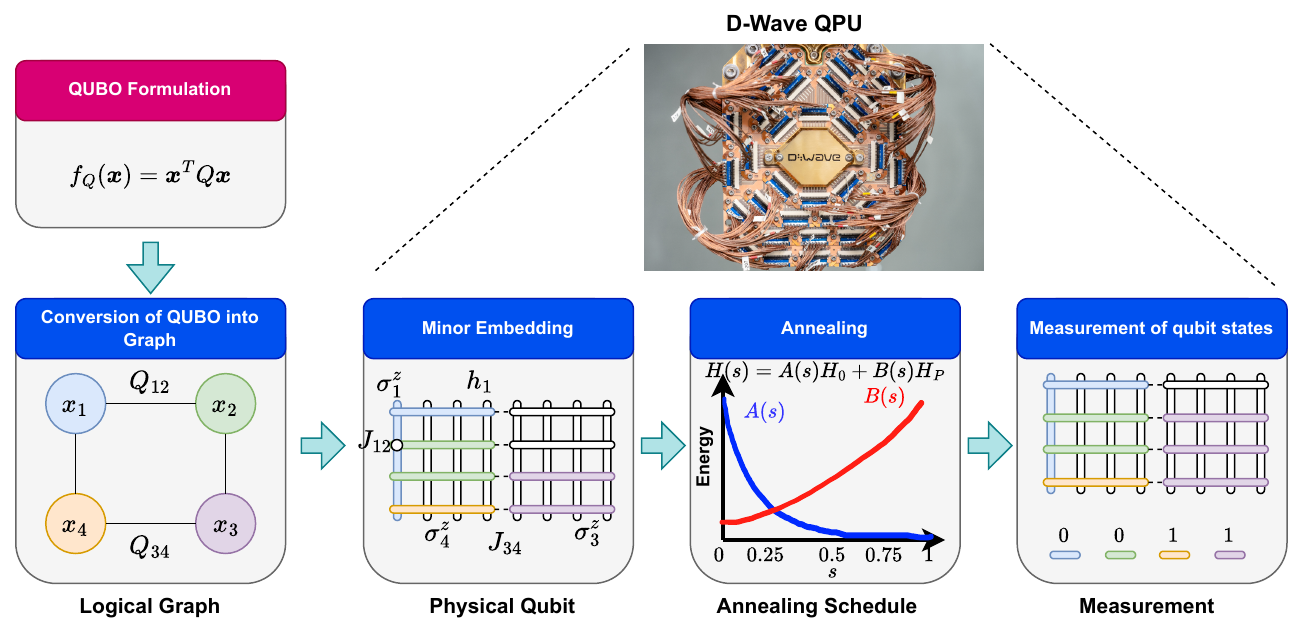}
    \caption{D-Wave Quantum Annealing Framework. The process begins with casting the problem into a QUBO form, followed by its translation into a logical graph. This graph is then embedded onto the hardware topology through minor embedding, after which the annealing dynamics defined by $A(s)$ and $B(s)$ are executed. Finally, the qubits are measured to obtain classical bitstring solutions.}
    \label{fig:dwave-framework}
\end{figure}

\section{Problem Statement}\label{sec:pro}

\subsection{Traveling Salesman Problem}
The TSP is a well-known NP-hard optimization problem that requires finding the shortest route visiting each city exactly once and returning to the starting city. Let $N$ be the number of cities, and let $d_{i,j}$ denote the distance (or cost) of traveling from city $i$ to city $j$.  
We introduce binary decision variables $x_{i,k}$ defined as
\begin{equation}
x_{i,k} =
\begin{cases}
1 & \text{if city $i$ is visited at step $k$,} \\
0 & \text{otherwise,}
\end{cases}
\end{equation}
where $i,k \in \{1,\dots,N\}$. The total number of binary variables is therefore $N^2$. These variables can be arranged into a binary vector
\begin{equation}
\boldsymbol{x} = [x_{1,1}, x_{1,2}, \dots, x_{1,N}, x_{2,1}, \dots, x_{N,N}]^\top \in \{0,1\}^{N^2},
\end{equation}
which encodes a complete candidate tour configuration. Each valid tour corresponds to a unique binary vector \(\boldsymbol{x}\) satisfying the constraints below.

\subsubsection{Constraints} To ensure a valid tour, the binary variables must satisfy two types of constraints. 
First, each city must be visited exactly once:
\begin{equation}
\sum\nolimits_{k=1}^N x_{i,k} = 1, \quad \forall i \in \{1,\dots,N\}.
\label{eq:tsp-city}
\end{equation}
Second, each step of the tour must correspond to exactly one city:
\begin{equation}
\sum\nolimits_{i=1}^N x_{i,k} = 1, \quad \forall k \in \{1,\dots,N\}.
\label{eq:tsp-step}
\end{equation}

\subsubsection{Objective function} 
The total tour length is expressed as
\begin{equation}
\min_{\boldsymbol{x}} \sum_{i=1}^N \sum_{j=1}^N \sum_{k=1}^N d_{i,j}\, x_{i,k} \, x_{j,(k \bmod N) + 1},
\end{equation}
where $d_{i,j}$ denotes the distance between city $i$ and city $j$. 
The index $(k \bmod N) + 1$ ensures that when $k = N$, the route returns to the start city, thereby forming a closed cycle. These constraints and the objective function together define the classical TSP. In the next subsection, we show how to encode them into a QUBO form suitable for QA.

\subsection{Quadratic Unconstrained Binary Optimization}
To solve the TSP on a quantum annealer, the constrained formulation must be converted into a QUBO problem. 
This is achieved by introducing penalty terms for the constraints and combining them with the objective function. 
Let $\lambda$ be a large positive constant to weight the constraint violations.

\subsubsection{Penalty terms} The tour validity constraints from~\eqref{eq:tsp-city}-\eqref{eq:tsp-step} can be encoded as
\begin{equation}
H_c(\boldsymbol{x}) = 
\lambda \sum_{i=1}^{N} \left(1 - \sum_{k=1}^{N} x_{i,k}\right)^2
+ \lambda \sum_{k=1}^{N} \left(1 - \sum_{i=1}^{N} x_{i,k}\right)^2.
\end{equation}
By construction, this penalty term vanishes if and only if the constraints are satisfied, meaning that each city is visited exactly once and each tour step corresponds to one city.

\subsubsection{Objective term} 
The cost function introduced in the TSP formulation can be directly written as
\begin{equation}
H_{\text{obj}}(\boldsymbol{x}) = 
\sum_{i=1}^{N} \sum_{j=1}^{N} \sum_{k=1}^{N} d_{i,j} x_{i,k} x_{j, (k \bmod N) + 1}.
\end{equation}
This term corresponds to the total tour length and favors solutions with shorter routes.

\subsubsection{Final QUBO form} The complete QUBO objective is obtained by combining the penalties and the cost:
\begin{equation}
\label{eq:tsp-qubo}
H(\boldsymbol{x}) = H_{\text{obj}}(\boldsymbol{x}) + H_c(\boldsymbol{x}), \quad \boldsymbol{x} \in \{0,1\}^{N^2}.
\end{equation}
When $\lambda \gg \max_{i,j} d_{i,j}$, constraint satisfaction is prioritized, and the minimizing solution corresponds to a valid tour with approximately minimal length. This binary quadratic form can be directly mapped onto the Ising Hamiltonian implemented by a quantum annealer.

\begin{comment}
\subsection{Annealing Schedule Optimization}
On a D-Wave QPU, once a QUBO problem is mapped onto the hardware topology via minor embedding, the embedding is fixed. 
At this stage, performance improvements rely on optimizing the annealing schedule $s(t)$ under the given embedding. 
Following the optimal control formulation in~\cite{chen2022optimizing}, the schedule design is expressed as
\begin{equation}\label{eq:aso}
\arg\min_{s(t)} \ \langle \psi(T) \mid H \mid \psi(T) \rangle,
\end{equation}
where the goal is to minimize the expected energy of the final state.     
\end{comment}

\subsection{Annealing Schedule Optimization}
On a D-Wave QPU, once a QUBO problem is mapped onto the hardware topology via minor embedding, the embedding is fixed. 
At this stage, performance improvements rely on optimizing the annealing schedule $s(t)$ under the given embedding. 
Following the optimal control formulation in~\cite{chen2022optimizing}, the schedule design is expressed as
\begin{equation}\label{eq:aso}
\arg\min_{s(t)} \ \langle \psi(T) \mid H \mid \psi(T) \rangle,
\end{equation}
where the goal is to minimize the expected energy of the final state. 

In general, $s(t)$ may take the form of a linear ramp or other parameterized families of schedules. 
In this work, we restrict to a Fourier-parameterized form, where the design vector includes both the annealing time $T$ and Fourier coefficients $\{\theta_m\}$. 
The optimization problem can thus be reformulated as
\begin{equation}\label{eq:aso-fourier}
\arg\min_{T,\,\{\theta_m\}} \ \langle \psi(T;\,\theta_1,\dots,\theta_M) \mid H \mid \psi(T;\,\theta_1,\dots,\theta_M) \rangle,
\end{equation}
where $M$ represents the number of Fourier coefficients used in the parameterization of the annealing schedule. This explicitly highlights that both $T$ and the Fourier parameters are optimized jointly in our approach, which sets the stage for the methodology introduced in the next section.

%\hl{IMPORTANT: This section is Problem Statement; however, I could not see any introduction to  Fourier-parameterized schedules (i.e., coefficients theta in the following section), which are jointly optimised in this work with annealing time. Please do check and revise the section accordingly. - Do check yourself, as my comment may not be correct. Please note that I review this paper as a generalist.} {\color{red} SG: I revised subsection:Annealing Schedule Optimization based on your comment.}

\section{Methodology}
\label{sec:met}
In this section, we present the proposed TuRBO framework for annealing schedule design on a D-Wave quantum annealer. 
Given a QUBO formulation of the TSP and its fixed minor embedding onto the hardware topology, we focus on optimizing the annealing schedule under that embedding. 
The schedule is parameterized in a Fourier basis as a control signal $s(t)$, and both the total annealing time $T$ and the Fourier coefficients $\boldsymbol{\theta}$ are optimized using TuRBO. Here, $\boldsymbol{\theta} = \{\theta_1, \theta_2, \ldots, \theta_M\}$ represents the set of Fourier coefficients that parameterize the annealing schedule.
At each iteration, TuRBO proposes a candidate schedule $(T,\boldsymbol{\theta})$, maps it to a device-valid signal through amplitude clipping and discretization, evaluates it on the QPU, and then updates a Gaussian process surrogate that guides the next proposal. 
Our methodology consists of three key elements: (i) a Fourier-parameterized annealing schedule, (ii) an objective and evaluation pipeline, and (iii) TuRBO. 
Figure~\ref{fig:Framework} provides an overview of the overall workflow.

\begin{figure*}[t]%
\begin{center}
\includegraphics[width=\linewidth]{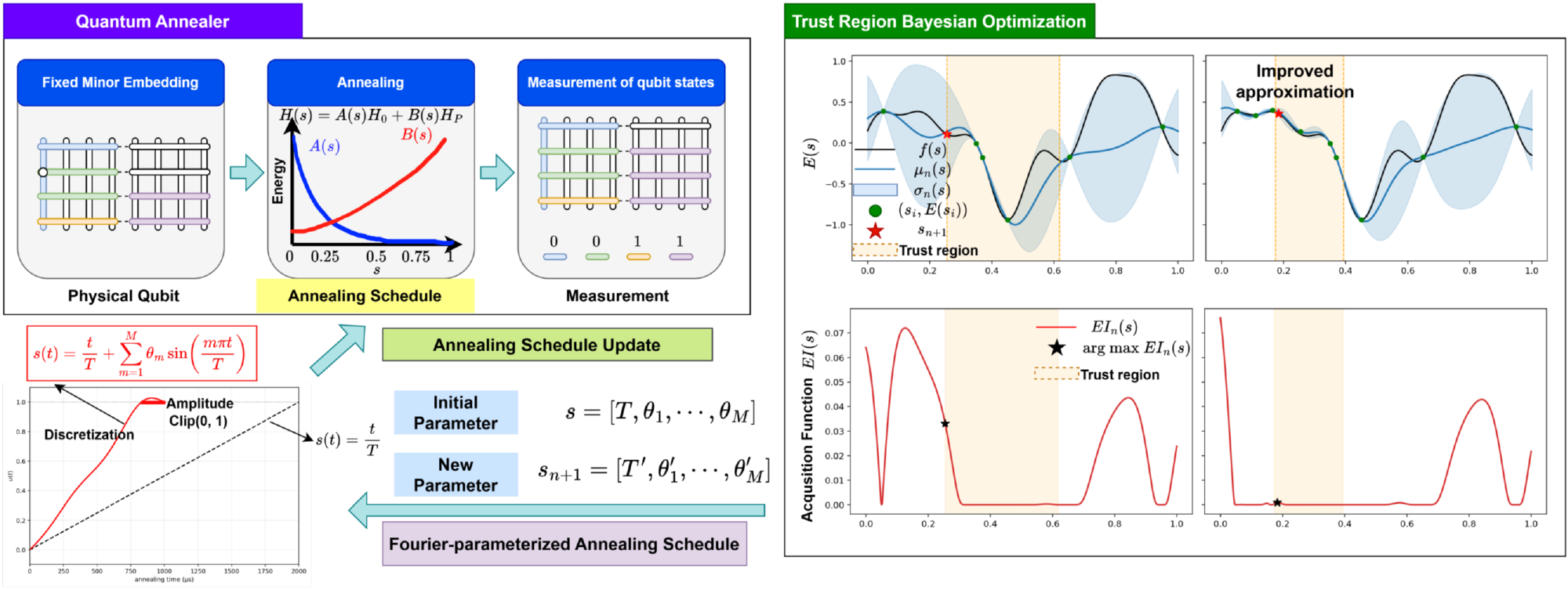}
\end{center}
\caption{TuRBO for annealing schedule on quantum annealer. 
(\textbf{Left Box}) QA workflow: fixed minor embedding, annealing with schedule $s(t)$, and measurement of qubit states. 
(\textbf{Right Box}) TuRBO: a Gaussian process surrogate models the objective energy $E(\boldsymbol{s})$, the acquisition function (Expected Improvement) selects new candidates within a trust region, and the surrogate is iteratively updated to refine the approximation and improve annealing schedules. 
The annealing schedule is parameterized in a Fourier basis. At each iteration, an initial parameter $\boldsymbol{s}$ is updated to a new parameter $\boldsymbol{s}'$ by TuRBO, which is then mapped to a device-valid schedule through amplitude clipping and discretization before being executed on the quantum annealer.
}
\label{fig:Framework}%
\end{figure*}

\begin{comment}
In this section, we introduce the proposed BO framework for annealing schedule design on a D-Wave quantum annealer. 
Given a QUBO formulation of the TSP and its fixed minor embedding onto the hardware topology, we focus on optimizing the annealing schedule under that embedding. 
We treat the schedule as a parametric control signal $s(t)$ and jointly optimize the total annealing time $T$ and the schedule shape via BO. 
Each BO iteration proposes a candidate schedule, evaluates it on the QPU, and updates a probabilistic surrogate model that guides the next proposal. 
This methodology consists of four main components: (i) a Fourier-parameterized annealing schedules, (ii) the formal definition of the figure of merit (FoM) based on energy, (iii) the Bayesian optimization surrogate and acquisition strategy, and (iv) adaptive sampling and trust-region techniques for efficient exploration. 
Figure~\ref{fig:method-overview} illustrates the overall workflow.    
\end{comment}

\subsection{Fourier-parameterized Annealing Schedule}
We parameterize the schedule as a truncated Fourier series:
\begin{equation}
s(t) = \frac{t}{T} + \sum_{m=1}^{M} \theta_m \sin\!\left(\frac{m \pi t}{T}\right),
\label{eq:fourier-schedule}
\end{equation}
where $T$ is the total annealing time, $M$ is the total number of Fourier components and $\{\theta_m\}$ are the Fourier coefficients. We define the TuRBO design vector as
\begin{equation}
\label{eq:design-vector}
\boldsymbol{s} \;=\; \big(T,\; \theta_1,\dots,\theta_M\big) \in \mathcal{S},
\end{equation} 
where $\mathcal{S}$ denotes the feasible space of the design vector, constrained by the problem and hardware limitations. We jointly optimize both $T$ and $\{\theta_m\}$ using TuRBO to improve solution quality under realistic hardware constraints.

To ensure a physically valid anneal, we apply the following post-processing before submission to the QPU:
(i) amplitude clipping $s(t)\!\leftarrow\!\min\{1,\max\{0,s(t)\}\}$ to enforce the valid range $[0,1]$,  
(ii) discretization of $s(t)$ onto the device control grid, taking into account the finite time and amplitude resolution supported by the hardware.

\subsubsection{Bounds}
We constrain the total annealing time $T$ within the device limits \( T \in [T_{\min}, T_{\max}] \). Additionally, we impose frequency-dependent box constraints on the Fourier coefficients:
\begin{equation}\label{eq:four}
\theta_m \in \Big[-\frac{\alpha}{m},\; \frac{\alpha}{m}\Big], \qquad m=1,\dots,M,
\end{equation}
where $\alpha$ is a constant, and the constraint serves as a low-pass regularizer, controlling the size of the search space for the Fourier coefficients.

\subsubsection{Discretization}
Since he function $s(t)$ cannot be passed to the QPU as a continuous curve. Instead, it must be specified on a finite grid of control points with limited time and amplitude resolution. 
In practice, we resample $s(t)$ onto this hardware-defined grid, rounding amplitudes to the nearest representable value and aligning time breakpoints to the supported discretization step. 
This ensures that the Fourier-parameterized schedule is faithfully mapped to a device-valid control signal.

\begin{remark}
The complexity of annealing schedule optimization grows with both the total annealing time $T$ and the number of Fourier coefficients $M$. 
A larger $M$ increases the expressiveness of the schedule but also expands the search space. 
Similarly, while longer annealing times $T$ may improve adiabaticity, hardware imposes practical limits. 
For instance, the D-Wave Advantage system typically allows annealing times between $1\ \mu\mathrm{s}$ and $2000\ \mu\mathrm{s}$, and the schedule $s(t)$ can only be specified at a finite number of control points with limited resolution~\cite{dwave}. 
Overshoots due to Fourier terms may also push $s(t)$ outside the valid $[0,1]$ range, requiring amplitude clipping before submission. 
Therefore, schedule optimization must balance expressive parametrization, hardware-imposed runtime bounds, and the precision of schedule control.
\end{remark}

\subsection{Objective and Evaluation Pipeline}
As defined in \eqref{eq:aso}, the goal of annealing schedule optimization is to minimize the expected energy of the final state. 
In practice, for each candidate schedule $\boldsymbol{s} = (T,\boldsymbol{\theta})$, the QPU is executed with a user-specified number of number of reads (shots), denoted $R$. 
This produces a sample set $\{\boldsymbol{x}^{(r)}\}_{r=1}^{R}$, where each $\boldsymbol{x}^{(r)} = (x_{1,1},\dots,x_{N,N}) \in \{0,1\}^{N^2}$ is a binary assignment of the decision variables defined in the TSP formulation, returned by the annealer.
The energy of each sample is evaluated by substituting $\boldsymbol{x}^{(r)}$ into the QUBO formulation \eqref{eq:tsp-qubo}. 
We then define the TuRBO objective as the minimum energy observed among the $R$ shots:
\begin{equation}
E(\boldsymbol{s}) \;=\; \min_{r=1,\dots,R} H\!\big(\boldsymbol{x}^{(r)};\boldsymbol{s}\big),
\label{eq:objective-energy}
\end{equation}
which serves as the direct feedback signal for the optimization loop. Here, $H(\boldsymbol{x}^{(r)};\boldsymbol{s})$ is the QUBO energy defined in \eqref{eq:tsp-qubo}. 
The TuRBO framework iteratively proposes new schedules $\boldsymbol{s}$, evaluates $E(\boldsymbol{s})$ on the QPU, and updates the surrogate model to guide subsequent exploration.

\begin{remark}
In practice, schedule optimization must also respect the QPU runtime limits. 
The total wall-clock time of an evaluation is determined not only by the annealing time $T$ in the schedule parameters $\boldsymbol{s}$, 
but also by the number of number of reads $R$ and hardware-specific overheads. 
On D-Wave systems, each call consists of a programming phase ($\sim$20 ms), followed by $R$ repetitions of anneal ($T$), readout ($\sim$100$\,\mu$s), and control overheads~\cite{dwave}. 
Thus the effective evaluation time can be approximated as
\[
t_{\text{eval}} \;\approx\; t_{\text{prog}} + R\cdot(t_{\text{anneal}}+t_{\text{readout}}+t_{\text{overhead}}).
\]
Consequently, when designing candidate schedules, both the annealing time $T$ and the readout budget $R$ must be chosen such that $t_{\text{eval}}$ does not exceed the QPU runtime limit. 
This constraint highlights the importance of balancing physical schedule parameters with practical execution time.
\end{remark}

\subsection{Trust Region Bayesian Optimization}
The standard BO is a black-box optimization framework designed for problems where the objective is expensive or noisy to evaluate. 
In our setting, the black-box function is the QPU-evaluated objective $E(\boldsymbol{s})$ defined in \eqref{eq:objective-energy}, 
which measures the minimum QUBO energy obtained from a set of annealing runs under a candidate schedule $\boldsymbol{s}=(T,\boldsymbol{\theta})$. 
Since each evaluation requires actual QPU queries, we cannot exhaustively explore the search space. 
Instead, BO builds a surrogate model that approximates the dependence of the energy on the schedule parameters $\boldsymbol{s}$, and uses this surrogate to decide where to query next.

However, evaluations on a quantum annealer are inherently noisy and expensive, and naive BO may waste queries by exploring unreliable regions. 
To mitigate this, we introduce a TuRBO strategy: the search is constrained to a local region around the incumbent best schedule, and this region is adaptively shrunk or expanded depending on recent progress. 
This makes BO more stable under stochastic hardware feedback and ensures efficient use of the limited QPU budget.

\subsubsection{Gaussian Process Surrogate}
The surrogate model explicitly captures the mapping between the schedule parameters \( \boldsymbol{s} \) and the objective function \( E(\boldsymbol{s}) \)
\begin{equation}
\boldsymbol{s} \mapsto E(\boldsymbol{s}),
\end{equation}
where $\boldsymbol{s}=(T,\theta_1,\dots,\theta_M)$ are the schedule parameters and $E(\boldsymbol{s})$ is the minimum QUBO energy obtained from samples under schedule $\boldsymbol{s}$. 
We employ a Gaussian process (GP) prior to model this mapping:
\begin{equation}
f \sim \mathcal{GP}(\mu_0, k(\cdot,\cdot)),
\end{equation}
where $\mu_0$ is the prior mean and $k(\cdot, \cdot)$ is the kernel function. The kernel function is defined as:
\begin{equation}
k(\boldsymbol{s},\boldsymbol{s}') = \sigma^2 \, k_{\text{Mat\'{e}rn 5/2}}(\boldsymbol{s},\boldsymbol{s}') + \sigma_n^2 \delta_{\boldsymbol{s}\boldsymbol{s}'},
\end{equation}
where $\boldsymbol{s}' = (T', \theta_1', \dots, \theta_M')$, $\sigma^2$ is the signal variance and $\sigma_n^2$ accounts for observation noise from stochastic QPU sampling. 
The Mat\'{e}rn 5/2 kernel is defined as
\begin{equation}
k_{\text{Mat\'{e}rn 5/2}}(\boldsymbol{s},\boldsymbol{s}') \;=\;
\Big(1+\sqrt{5}d+\tfrac{5}{3}d^2\Big)\exp(-\sqrt{5}d),
\end{equation}
where $d=\|\boldsymbol{s}-\boldsymbol{s}'\|/\ell$ is the scaled Euclidean distance and $\ell$ denotes the characteristic length-scale. We employ automatic relevance determination (ARD)~\cite{snoek2012practical, wang2023recent}, assigning an individual length-scale to each dimension of $\boldsymbol{s}$, 
which allows the GP to automatically learn which schedule parameters are more influential for the energy landscape.

\begin{remark}
We adopt the Mat\'{e}rn-$5/2$ kernel because it balances smoothness and flexibility. 
Unlike the squared exponential kernel that assumes infinitely differentiable functions, Mat\'{e}rn-$5/2$ models functions that are only twice differentiable, making it better suited for black-box objectives such as annealing schedule evaluations~\cite{finvzgar2024designing}. 
This is particularly relevant when optimizing Fourier-parameterized annealing schedules: 
small changes in the coefficients $\boldsymbol{\theta}$ or the total annealing time $T$ can induce non-linear and noisy variations in the observed energy. 
The Mat\'{e}rn-$5/2$ kernel has been widely adopted in BO for hyperparameter tuning and black-box optimization tasks~\cite{snoek2012practical}.
\end{remark}

Given observed data $\mathcal{D}_n=\{(\boldsymbol{s}_i, E(\boldsymbol{s}_i))\}_{i=1}^n$, where $n$ represents the number of observed schedule-energy pairs, the GP posterior provides, for any new candidate schedule $\boldsymbol{s}$, 
a predictive mean $\mu_n(\boldsymbol{s})$ and a predictive variance $v_n^2(\boldsymbol{s})$. 
The mean $\mu_n(\boldsymbol{s})$ serves as the surrogate estimate of the energy, while the variance $v_n^2(\boldsymbol{s})$ quantifies the model's uncertainty about this estimate. 
Intuitively, $\mu_n(\boldsymbol{s})$ guides exploitation by favoring schedules predicted to yield low energies, 
whereas $v_n^2(\boldsymbol{s})$ guides exploration by encouraging evaluation of schedules in regions where the surrogate is uncertain. 
Together, these two quantities form the basis for the acquisition function used to decide where to query next.

\subsubsection{Acquisition Function}
The predictive mean $\mu_n(\boldsymbol{s})$ and variance $v_n^2(\boldsymbol{s})$ from the GP posterior are combined through an acquisition function to decide which schedule parameters to evaluate next. 
We employ the expected improvement (EI) criterion~\cite{frazier2018tutorialbayesianoptimization}, which measures the expected amount by which a new evaluation at $\boldsymbol{s}$ will improve upon the incumbent best energy $f^*=\min_{i\le n} E(\boldsymbol{s}_i)$. 
This acquisition naturally balances exploitation and exploration: schedules with low predicted energy (small $\mu_n(\boldsymbol{s})$) are favored, but so are schedules with high predictive uncertainty (large $v_n^2(\boldsymbol{s})$).

Formally, the expected improvement is defined as:
\begin{equation}
\mathrm{EI}_n(\boldsymbol{s}) =
\big(f^* - \mu_n(\boldsymbol{s}) - \xi\big)\Phi\!\big(z(\boldsymbol{s})\big)
+ v_n(\boldsymbol{s})\phi\!\big(z(\boldsymbol{s})),
\end{equation}
where $z(\boldsymbol{s}) = \tfrac{f^* - \mu_n(\boldsymbol{s})-\xi}{v_n(\boldsymbol{s})}$, 
$\xi \ge 0$ is a hyperparameter that controls the trade-off between exploitation and exploration in the BO algorithm, and $(\Phi,\phi)$ are the Gaussian cumulative density function and probability density function. 
By convention, $\mathrm{EI}_n(\boldsymbol{s})=0$ if $v_n(\boldsymbol{s})=0$. 
The first term promotes exploitation, favoring candidates predicted to perform better than $f^*$, while the second term encourages exploration by focusing on regions where the model is uncertain. To determine the next query, we solve:
\begin{equation}
\boldsymbol{s}_{n+1} = \arg\max_{\boldsymbol{s}\in\mathcal{R}_n} \mathrm{EI}_n(\boldsymbol{s}),
\end{equation}
where $\mathcal{R}_n \subseteq \mathcal{S}$ denotes the current trust region centered at the incumbent best solution. 
Here, the feasible set $\mathcal{S}$ is defined by hardware-imposed box constraints, namely 
$T \in [T_{\min},T_{\max}]$ (typically $T_{\min}=1\,\mu\text{s}$ and $T_{\max}=2000\,\mu\text{s}$ on D-Wave quantum annealer), and frequency-dependent bounds on the Fourier coefficients as defined in~\eqref{eq:four}.
The maximization is performed using multi-start local search. 
Restricting the acquisition optimization to $\mathcal{R}_n$ ensures that candidate schedules satisfy device limits while focusing exploration around promising regions, and the adaptive expansion or contraction of the trust region preserves the ability to escape local optima under noisy hardware feedback. Finally, the bserved dataset is updated as follows:
\begin{equation}
    \mathcal{D}_{n+1}\gets \mathcal{D}_n \cup \{(\boldsymbol{s}_{n+1},E(\boldsymbol{s}_{n+1}))\},
\end{equation}
where $\mathcal{D}_{n+1}$ is the updated dataset that includes the new candidate schedule and its corresponding energy.
\subsubsection{Trust Region}
Because QPU evaluations are noisy and expensive, unrestricted maximization of the acquisition function over the full domain $\mathcal{S}$ may result in inefficient exploration or instability. 
To address this, TuRBO restricts the search to a local trust region $\mathcal{R}_n$ centered at the current incumbent $\boldsymbol{s}^*$. 
We define the trust region as an axis-aligned hyper-rectangle in the schedule parameter space:
\begin{equation}
\mathcal{R}_n = \{\boldsymbol{s} : |s_j - s_j^*| \leq \Delta_j, \; j=1,\dots,M+1 \},
\end{equation}
where each \( s_j \) represents the \( j \)-th parameter of the annealing schedule, \( M+1 \) is the total number of schedule parameters, and \( \Delta_j \) denotes the side length along the \( j \)-th dimension. The side lengths \( \Delta_j \) are updated adaptively: if no improvement is observed over \( k \) iterations, \( \Delta_j \) is reduced by a factor \( \rho < 1 \), shrinking the region to encourage local refinement. Conversely, when improvement is observed, the region is expanded by \( \rho^{-1} \), up to the hardware-imposed box constraints.

This simple hyper-rectangle definition follows the principle of TuRBO methods such as TuRBO~\cite{eriksson2019scalable, namura2025regional}, which have shown that maintaining local search regions improves performance in noisy or high-dimensional settings. 
In our case, the number of schedule parameters $(T,\theta_1,\dots,\theta_M)$ is moderate, but the QPU noise makes adaptive restriction essential for stability. 
By concentrating evaluations within $\mathcal{R}_n$, we exploit local structure around promising schedules, while the shrink-expand mechanism preserves the ability to explore globally and escape local optima.
The detailed procedure is presented in Algorithm~\ref{alg:trbo-fourier}.

\subsection{Complexity Analysis}
In this subsection, e analyze the computational and memory complexity of the proposed TuRBO method for optimizing Fourier-parameterized annealing schedules.

\begin{theorem}[Computational Complexity of TuRBO (single-point) for Fourier-based Schedule Optimization]
\label{thm:turbo-complexity}
Let $\boldsymbol{s}=(T,\theta_1,\dots,\theta_M)$ be the Fourier-parameterized annealing schedule 
defined in~\eqref{eq:fourier-schedule}-\eqref{eq:design-vector}, and let the TuRBO objective be 
$E(\boldsymbol{s})$ as in~\eqref{eq:objective-energy}. 
Assume TuRBO proposes \emph{one} candidate per iteration (no batch), maintains $M_{\mathrm{TR}}$ trust regions, 
each capped at $n_{\max}$ observations, and maximizes the acquisition with $R_{\mathrm{acq}}$ random restarts in 
dimension $p=M{+}1$.

Then, the time complexity for a single iteration is:
\begin{equation}
T_{\mathrm{iter}}
= \mathcal{O}\!\Big(M_{\mathrm{TR}}\,n_{\max}^3 
   + R_{\mathrm{acq}}\,M_{\mathrm{TR}}\,C_{\mathrm{acq}}(d)\Big),
\end{equation}
and the memory complexity is:
\begin{equation}
S_{\mathrm{iter}}
= \mathcal{O}\!\big(M_{\mathrm{TR}}\,n_{\max}^2\big),
\end{equation}
where \( M_{\mathrm{TR}} \) is the number of trust regions, \( n_{\max} \) is the maximum number of observations per trust region, \( R_{\mathrm{acq}} \) is the number of random restarts used for acquisition maximization, \( C_{\mathrm{acq}}(p) \) represents the cost of one local acquisition maximization in a \( p \)-dimensional space, and the term \( n_j^2 \) captures the complexity of each GP posterior evaluation, including the computation of the mean, variance, and gradient.

Moreover, under standard smoothness and surrogate-accuracy assumptions, TuRBO attains an 
$\varepsilon$-stationary point in
\begin{equation}
\mathcal{O}(\varepsilon^{-2}).
\end{equation}
QPU evaluations (up to additional logarithmic factors in $1/\varepsilon$).
\end{theorem}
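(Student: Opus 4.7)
The plan is to decompose a single TuRBO iteration into its two dominant computational subroutines, bound each one separately using standard Gaussian process (GP) analysis, and then combine these per-iteration costs with a cumulative-regret argument to obtain the stated $\mathcal{O}(\varepsilon^{-2})$ query complexity. First, I would fix the per-iteration accounting. Inside each of the $M_{\mathrm{TR}}$ trust regions, TuRBO maintains an independent GP surrogate trained on at most $n_{\max}$ observations. Fitting or updating the posterior requires a Cholesky factorization of the $n_j \times n_j$ kernel matrix (with $n_j \le n_{\max}$), which costs $\mathcal{O}(n_j^3)$ time and $\mathcal{O}(n_j^2)$ storage. Summing over trust regions yields the claimed $\mathcal{O}(M_{\mathrm{TR}}\, n_{\max}^3)$ time and $\mathcal{O}(M_{\mathrm{TR}}\, n_{\max}^2)$ memory contributions. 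Because the Mat\'ern-$5/2$ kernel used here is stationary and twice-differentiable, gradients of $\mu_n$ and $v_n$ admit closed-form expressions, so each posterior evaluation requires only $\mathcal{O}(n_j^2)$ work once the Cholesky factor is cached.

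Next, I would handle the acquisition step. Maximizing $\mathrm{EI}_n$ within the hyper-rectangular region $\mathcal{R}_n$ is performed via $R_{\mathrm{acq}}$ random restarts in dimension $p=M{+}1$, each running a local optimizer whose per-restart cost I denote $C_{\mathrm{acq}}(p)$. Since the acquisition is run in each of the $M_{\mathrm{TR}}$ trust regions, the aggregate acquisition cost is $\mathcal{O}(R_{\mathrm{acq}}\, M_{\mathrm{TR}}\, C_{\mathrm{acq}}(p))$. Adding the GP and acquisition contributions gives the stated $T_{\mathrm{iter}}$. The amplitude-clipping and grid-discretization steps used to render $s(t)$ device-valid only add $\mathcal{O}(G)$ work for $G$ control points, which is absorbed into lower-order terms and therefore omitted from the bound.

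For the convergence claim, I would invoke the standard noisy-BO analysis applied locally inside each trust region. The QPU returns noisy evaluations of $E(\boldsymbol{s})$ with bounded variance (by the compactness of $\mathcal{S}$ and boundedness of $H$ on $\{0,1\}^{N^2}$), so the GP surrogate with a Mat\'ern-$5/2$ kernel has a sub-linear maximum information gain $\gamma_n = \widetilde{\mathcal{O}}(n^{\,p/(2\nu+p)})$ with $\nu = 5/2$. Combined with the trust-region shrink-expand mechanism, which guarantees that the side lengths $\Delta_j$ cannot collapse before a sufficient number of successful improvements is accumulated, a standard regret-to-stationarity conversion (as in the TuRBO and noisy-BO literature cited in the paper) gives that to reach an $\varepsilon$-stationary point of $E(\boldsymbol{s})$ it suffices to make $\mathcal{O}(\varepsilon^{-2})$ QPU evaluations, up to polylogarithmic factors in $1/\varepsilon$ that are absorbed into the $\widetilde{\mathcal{O}}$ notation.

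The main obstacle I expect is the convergence statement rather than the bookkeeping of $T_{\mathrm{iter}}$ and $S_{\mathrm{iter}}$. Concretely, one must verify that the ``standard smoothness and surrogate-accuracy assumptions'' invoked in the theorem are actually compatible with our setting: the objective $E(\boldsymbol{s})$ is the minimum over $R$ shots of a discrete QUBO energy, which is not differentiable in $\boldsymbol{s}$ in the classical sense, and its dependence on the Fourier coefficients is mediated by the noisy QPU dynamics. I would handle this by reinterpreting $E(\boldsymbol{s})$ as the expectation of a bounded, Lipschitz-in-distribution functional of the schedule, and defining $\varepsilon$-stationarity with respect to the smoothed surrogate $\mu_n$ rather than $E$ itself. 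Once this smoothing is made explicit, the $\mathcal{O}(\varepsilon^{-2})$ rate follows from the usual information-gain argument; the per-iteration bounds then complete the proof.
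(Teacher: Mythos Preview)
Your per-iteration accounting matches the paper exactly: both decompose the cost into (i) Cholesky factorization of the $n_j\times n_j$ kernel matrix at $\mathcal{O}(n_j^3)$ time and $\mathcal{O}(n_j^2)$ memory per trust region, and (ii) acquisition maximization at $\mathcal{O}(R_{\mathrm{acq}}\,C_{\mathrm{acq}}(p))$ per trust region with each posterior query costing $\mathcal{O}(n_j^2)$ via cached factors, then sum over $M_{\mathrm{TR}}$ regions.

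Where you diverge is the $\mathcal{O}(\varepsilon^{-2})$ convergence claim. The paper does \emph{not} go through information gain or a regret-to-stationarity conversion. Instead it invokes classical trust-region theory directly: under a local $L$-Lipschitz gradient assumption on $E(\boldsymbol{s})$, a GP-UCB confidence bound, and a Cauchy-step sufficient-decrease condition, each \emph{successful} step reduces $E$ by at least $\mathcal{O}(\varepsilon^2/L)$ whenever $\|\nabla E\|>\varepsilon$, so $\mathcal{O}(\varepsilon^{-2})$ successes suffice; unsuccessful steps are bounded by the $\mathcal{O}(\log(\ell_0/\ell_{\min}))$ shrink budget, and the $\beta_t$ confidence terms contribute only logarithmic factors. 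This is the standard nonconvex trust-region complexity argument (cf.\ Conn--Gould--Toint), and it yields $\varepsilon^{-2}$ without any dependence on the Mat\'ern information gain.

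Your route via $\gamma_n=\widetilde{\mathcal{O}}(n^{\,p/(2\nu+p)})$ is more circuitous and, as stated, has a gap: GP-UCB regret bounds control cumulative simple regret, not gradient stationarity, and the ``regret-to-stationarity conversion'' you invoke is not standard in the TuRBO literature (the original TuRBO paper gives no rates). Moreover, solving $\sqrt{\gamma_n/n}\le\varepsilon$ for Mat\'ern-$5/2$ gives $n\gtrsim \varepsilon^{-2(p+5)/5}$, which is strictly worse than $\varepsilon^{-2}$ in dimension $p>0$. Your concern about the nondifferentiability of $E(\boldsymbol{s})$ is well taken, but the paper simply absorbs it into the ``standard smoothness'' assumption (A1) rather than resolving it through surrogate smoothing. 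To align with the stated theorem, you should replace the information-gain argument with the sufficient-decrease telescoping argument.
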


\noindent\textit{Proof.} See Appendix~\ref{appendix:turbo-proof}.

\begin{algorithm}[t]
\caption{TuRBO for Fourier-Parameterized Annealing Schedules}
\label{alg:trbo-fourier}
\begin{algorithmic}[1]
\Require QUBO $H$, fixed embedding, Fourier order $M$, feasible set $\mathcal{S}$ (box constraints), initial TR side lengths $\boldsymbol{\Delta}_{\text{init}}$, shrink/expand factor $\rho$, patience $k$, runtime budget $B$.
\Ensure Optimized schedule $\boldsymbol{s}^*=(T^*,\boldsymbol{\theta}^*)$

\State Initialize $n$ space-filling points $\{\boldsymbol{s}_i\}_{i=1}^{n}\subset\mathcal{S}$.
\For{$i=1$ to $n$}
    \State Construct $s(t)$ via \eqref{eq:fourier-schedule}, clip to $[0,1]$, discretize to QPU grid.
    \State Run QPU with $R_0$ number of reads under budget $B$, obtain $\{\boldsymbol{x}^{(r)}\}$.
    \State Compute $E(\boldsymbol{s}_i) = \min_r H(\boldsymbol{x}^{(r)};\boldsymbol{s}_i)$.
\EndFor
\State $\mathcal{D}_{n} \gets \{(\boldsymbol{s}_i,E(\boldsymbol{s}_i))\}$.
\State Set incumbent $\boldsymbol{s}^* \gets \arg\min E(\boldsymbol{s}_i)$, $f^* \gets E(\boldsymbol{s}^*)$.
\State Initialize trust region $\mathcal{R}_n$ centered at $\boldsymbol{s}^*$ with side lengths $\boldsymbol{\Delta}_{\text{init}}$.
\State no\_improve $\gets 0$.

\While{budget not exhausted}
    \State Fit GP with Matérn-5/2 (ARD) + white noise on $\mathcal{D}_n$.
    \State Select next schedule
        \[
        \boldsymbol{s}_{n+1} \gets \arg\max_{\boldsymbol{s}\in \mathcal{R}_n} \mathrm{EI}_n(\boldsymbol{s})
        \]
        using multi-start local search.
    \State Choose number of reads $R$ adaptively (small early, large near $\boldsymbol{s}^*$) under budget $B$.
    \State Build $s(t)$, clip, discretize, run QPU with $R$ number of reads.
    \State Compute $E(\boldsymbol{s}_{n+1})=\min_r H(\boldsymbol{x}^{(r)};\boldsymbol{s}_{n+1})$.
    \State Update $\mathcal{D}_{n+1}\gets \mathcal{D}_n \cup \{(\boldsymbol{s}_{n+1},E(\boldsymbol{s}_{n+1}))\}$.
    \If{$E(\boldsymbol{s}_{n+1}) < f^*$}
        \State $\boldsymbol{s}^* \gets \boldsymbol{s}_{n+1}$, $f^* \gets E(\boldsymbol{s}_{n+1})$, no\_improve $\gets 0$.
        \State Expand $\mathcal{R}_{n+1}$ by factor $\rho^{-1}$ within $\mathcal{S}$, re-center at $\boldsymbol{s}^*$.
    \Else
        \State no\_improve $\gets$ no\_improve $+1$.
        \If{no\_improve $\ge k$}
            \State Shrink $\mathcal{R}_{n+1}$ by factor $\rho$, re-center at $\boldsymbol{s}^*$.
            \State no\_improve $\gets 0$.
        \EndIf
    \EndIf
    \If{any side length of $\mathcal{R}_{n+1}$ $< \Delta_{\min}$}
        \State Restart: reinitialize trust region around incumbent or a new seed.
    \EndIf
\EndWhile
\State \Return $\boldsymbol{s}^*, E(\boldsymbol{s}^*)$
\end{algorithmic}
\end{algorithm}

%\section{Experiments}
\section{Results}
\label{sec:Exp}
This section presents an experimental evaluation of the proposed TuRBO of annealing schedules on a quantum annealer, including the setup, benchmarks, and performance across multiple TSP instances generated. All implementations, including reimplemented versions of some baselines, are made publicly available on our GitHub repository https://github.com/JeongQC/QASchedule.
\subsection{Experimental Settings}

\subsubsection{Data settings} 
We generate TSP instances by randomly placing cities in a two-dimensional Euclidean plane. 
The distance matrix is computed as the symmetric Euclidean distance between all pairs of cities. 
To ensure reproducibility, all instances are generated with a fixed random seed of 0, and we gradually increase the number of cities $N$ to evaluate scalability.

\subsubsection{Baselines} 
We compare the proposed TuRBO framework against the following baselines:

\begin{itemize}
    \item Schedule optimization baselines:
    \begin{enumerate}[]
        \item RS: At each iteration, a candidate schedule $\boldsymbol{s}=(T,\theta_1,\ldots,\theta_M)$ is drawn by independently sampling $T \sim \mathrm{LogUniform}[T_{\min},T_{\max}]$ and $\theta_m \sim \mathrm{Uniform}[-1/m,,1/m]$. Each candidate is evaluated on the QPU under a fixed read-budget profile, and the incumbent with the lowest tour energy is retained.
        \item GS: The search is initialized at the geometric mean annealing time $T_{\text{geo}}$ and zero Fourier coefficients ($\theta_m{=}0$). At each step, one coordinate from $\{T,\theta_1,\ldots,\theta_M\}$ is randomly selected, and $\pm$ perturbations are tested (log-domain steps for $T$, clipped linear steps for $\theta_m$). A move is accepted only if the objective energy decreases; otherwise, the step size is gradually reduced until the evaluation budget is exhausted.
    \end{enumerate}
    \item General optimization baselines:
    \begin{enumerate}
        \item SA: This algorithm is implemented using the default configuration from the D-Wave Ocean library, with the objective of minimizing the total tour length encoded in the TSP QUBO energy.
        \item GA: A standard evolutionary search algorithm is applied to the Quadratic QUBO formulation of the TSP, with the objective function defined as the minimization of the total tour distance.
        \item CPLEX: A state-of-the-art classical solver for mixed integer programming (MIP), utilized to provide strong baseline solutions for comparison.
        %\item Constrained Quadratic Model (CQM): solved using the D-Wave hybrid CQM solver with default settings in the Ocean library.
    \end{enumerate}
\end{itemize}

\subsubsection{Implementation details}
All algorithms are implemented in Python 3.8 using the official D-Wave Ocean SDK and standard optimization libraries. 
Schedule optimization methods (TuRBO, RS, GS) are executed on a local workstation equipped with an AMD Ryzen 9 7950X processor, 127GB of RAM, and running Windows 11. 
Classical optimization baselines, including GA, SA, and CPLEX, are also executed on the same local machine; SA is run with the 2000 num\_read and default configuration in the Ocean library, and CPLEX is executed with its default MIP solver settings. 
QA evaluations of candidate schedules is executed on the \textit{D-Wave Advantage2\_System1.6} (zephyr, 4800 qubits) via the D-Wave Leap cloud platform. 
The annealing time $T$ and the number of reads $R$ are selected within the hardware limits of the QPU, and CQM is run with the default settings provided in the Ocean library. 
For fairness, all baselines and the proposed method are evaluated on the same TSP instances generated with random seed 0, and all reported metrics correspond to the averages over 10 independent runs.

\subsection{Evaluation Metrics}

We use the following evaluation metrics to assess the performance of the proposed framework and the baseline methods:

\begin{itemize}
    \item Energy ($E$): The minimum tour length, also referred to as the QUBO energy, as defined in equation \eqref{eq:tsp-qubo}.
    \item Success Probability ($p_{\mathrm{succ}}$): The fraction of valid solutions that satisfy the constraints of the TSP out of all solutions generated by the algorithm.
    \item CBF: The average fraction of broken qubit chains in the minor embedding, as reported by the D-Wave sampler.
     \item Time to Solution ($TTS$): The expected time to observe at least one ground state solution with confidence $Q$ (default $Q{=}0.99$). 
    Let $T_f$ denote the annealing time per run and $P(T_f)$ the per-run ground state success probability, i.e., the fraction of runs in which at least one read attains the ground-state energy $E^*$. 
    The probability of at least one ground-state success after $R$ runs (readouts) is $Q = 1 - \bigl(1 - P(T_f)\bigr)^R$
    % \[
    %   Q \;=\; 1 - \bigl(1 - P(T_f)\bigr)^N,
    % \]
    and the corresponding $TTS$ is expressed as:
    \[
      TTS \;=\; T_f \frac{\ln(1-Q)}{\ln\!\bigl(1 - P(T_f)\bigr)}.
    \]
    Note that $P(T_f)$ is computed from the per-read ground-state probability; this differs from the per-read feasible success $p_{\mathrm{succ}}$ reported above. 
\end{itemize}
\begin{comment}
     \item TTS: the expected time to observe at least one ground state solution with confidence $Q$ (default $Q{=}0.99$). 
    Let $T_f$ denote the annealing time per run and $P(T_f)$ the per-run ground state success probability, i.e., the fraction of runs in which at least one read attains the ground-state energy $E^*$. 
    The probability of at least one ground-state success after $N$ runs is
    \[
      Q \;=\; 1 - \bigl(1 - P(T_f)\bigr)^N,
    \]
    and the corresponding TTS is
    \[
      \mathrm{TTS} \;=\; T_f \cdot \frac{\ln(1-Q)}{\ln\!\bigl(1 - P(T_f)\bigr)}.
    \]
    Note that $P(T_f)$ is computed from the per-read ground-state probability; this differs from the per-read feasible success $p_{\mathrm{succ}}$ reported above.   
\end{comment}

\subsection{Research Questions}

We design experiments to answer the following research questions (RQs):

\begin{description}
    \item[RQ1:] What is the maximum size $N$ (number of cities) of the TSP that can be directly embedded on the D-Wave Advantage2 QPU without additional decomposition or hybrid algorithms? 
    This question evaluates the intrinsic hardware embedding capacity.

    \item[RQ2:] How effective is the proposed TuRBO compared to baselines (RS, GS)? 
    We compare in terms of minimum energy, success probability, CBF, and TTS.

    \item[RQ3:] Does reducing the control/search space by fixing the annealing time \( T \) and the readout counts lead to more effective schedule optimization? We optimize only the schedule shape (Fourier coefficients) and compare the following two approaches: (i) joint optimization of both \( T \) and \( \boldsymbol{\theta} \), and (ii) Fourier-only schedule optimization, where \( T \) and the readout counts are fixed and only \( \boldsymbol{\theta} \) is optimized. We evaluate the effectiveness of each approach using the following metrics: best energy, \( p_{\text{succ}} \), CBF, and reconstructed annealing schedules.

    \item[RQ4:] When applying the proposed TuRBO schedule optimization, can QA find high-quality solutions faster than classical algorithms? 
    We evaluate computation time, contrasting QPU-based results with classical baselines.
\end{description}

% Embedding scalability
% Effectiveness of TR-BO
% Schedule robustness
% Scaling behavior
% Quantum speedup potential

\subsection{RQ1: Embedding Scalability}
\begin{table}[h]
\centering
\caption{Zephyr clique-embedding statistics for TSP with $N$ customers.}
\label{tab:clique-embed}
\begin{tabular}{c c c c}
\hline
$N$ & Logical variables & Physical qubits & Chain length (mean) \\
\hline
2 & 4   & 4    & 1.00 \\
3 & 9   & 22   & 2.44 \\
4 & 16  & 40   & 2.50 \\
5 & 25  & 88   & 3.52 \\
6 & 36  & 161  & 4.47 \\
7 & 49  & 267  & 5.45 \\
8 & 64  & 416  & 6.50 \\
9 & 81  & 687  & 8.48 \\
10 & 100 & 1151 & 11.51 \\
\hline
\end{tabular}
\end{table}
We evaluate embedding scalability for the TSP QUBO on the \textit{D-Wave Advantage2\_System1.6} (Zephyr topology, 4800 qubits) using the \textit{DWaveCliqueSampler}. Because practical QPUs, including Zephyr, use sparse, bounded-degree hardware graphs, we emulate all-to-all connectivity by embedding a clique (complete graph), which provides a clean, device-level upper-bound baseline independent of problem sparsity~\cite{pelofske2025comparing}. For an instance with \(N\) cities, the assignment formulation has \(N^2\) binary variables; for each \(N\), we embed the complete graph \(K_{N^2}\) on the Zephyr topology and record three quantities from the resulting embedding: the logical-variable count \(N^2\), the number of unique physical qubits used, and the mean chain length, where a chain denotes the connected set of physical qubits assigned to a single logical variable. We query the sampler for its \textit{largest\_clique\_size} \(K_{\max}\) and set the test range accordingly: the clique required by size \(N\) fits only if \(N^2 \le K_{\max}\) (equivalently, \(N_{\max}=\lfloor\sqrt{K_{\max}}\rfloor\)). In our setup this yields \(N_{\max}=10\); therefore all subsequent experiments use \(N \le 10\). In our measurements, the computation time for clique embedding on Zephyr topology is approximately $0.01$\,s for all sizes, while the physical-qubit footprint and the mean chain length grow sharply with \(N\) (Table~\ref{tab:clique-embed}); for example, physical qubits increase from $4$ at \(N{=}2\) to $1151$ at \(N{=}10\) and the mean chain length from $1.0$ to $11.5$, indicating increasing sensitivity to noise and chain breaks~\cite{venturelli2015quantum,grant2022benchmarking}. A detailed theoretical analysis of these embedding scalings, including separator-based lower bounds, constructive upper bounds, 
and their implications for Table~\ref{tab:clique-embed}, is provided in Appendix~\ref{app:chain-scaling}.

\subsection{RQ2: Effectiveness of Trust-Region Bayesian Optimization}
\label{sec:rq2}
We evaluate the effectiveness of TuRBO for annealing-schedule design of the TSP QUBO on \textit{D-Wave Advantage2\_System1.6} under a fixed minor embedding and a Fourier parameterization \(s(t)=t/T+\sum_{m=1}^{8}\theta_m\sin(m\pi t/T)\). For budget fairness, TuRBO is compared with RS and GS using the same number of candidate schedules and an identical number of reads schedule (from $250$ to $900$), while feasibility and energy are evaluated identically across methods. We report feasible best energy, $p_{\text{succ}}$, CBF at best energy, annealing time ($T_{us}$), and readout counts ($\# ~\text{readout}$). 

\begin{table}[t]
\centering
\caption{Summary of TSP results across methods (TuRBO, RS, GS).}
\label{tab:tsp_results}
%\scriptsize
\begin{tabular}{c l c c c c c}
\hline
$N$ & Method & Best Energy & $p_{\text{succ}}$ & CBF & $T_{us}$ & $\# ~\text{readout}$ \\
\hline
  & TuRBO & 163.68 & 1.0000 & 0.0000 & 20.00 & 250 \\
2 & RS    & 163.68 & 1.0000 & 0.0000 & 21.73 & 250 \\
  & GS    & 163.68 & 1.0000 & 0.0000 & 20.00 & 250 \\
\hline
  & TuRBO & 206.91 & 1.0000 & 0.0000 & 20.00 & 250 \\
3 & RS    & 206.91 & 0.9170 & 0.0000 & 30.96 & 250 \\
  & GS    & 206.91 & 0.9520 & 0.0000 & 20.00 & 250 \\
\hline
  & TuRBO & 238.06 & 0.7600 & 0.0000 & 20.00 & 250 \\
4 & RS    & 238.06 & 0.6153 & 0.0000 & 34.46 & 250 \\
  & GS    & 238.06 & 0.6230 & 0.0000 & 20.00 & 250 \\
\hline
  & TuRBO & 249.43 & 0.6891 & 0.0000 & 20.00 & 250 \\
5 & RS    & 249.43 & 0.5875 & 0.0000 & 22.74 & 250 \\
  & GS    & 249.43 & 0.5903 & 0.0000 & 20.94 & 250 \\
\hline
  & TuRBO & 271.75 & 0.3441 & 0.0000 & 20.00 & 250 \\
6 & RS    & 271.75 & 0.2678 & 0.0000 & 332.49 & 283 \\
  & GS    & 271.75 & 0.3272 & 0.0000 & 194.94 & 250 \\
\hline
  & TuRBO & 327.27 & 0.0441 & 0.0005 & 56.29 & 250 \\
7 & RS    & 327.27 & 0.0276 & 0.0153 & 183.44 & 269 \\
  & GS    & 327.27 & 0.0408 & 0.0151 & 194.94 & 250 \\
\hline
  & TuRBO & 265.24 & 0.0098 & 0.0003 & 21.62  & 306 \\
8 & RS    & 265.24 & 0.0058 & 0.0004 & 48.53  & 343 \\
  & GS    & 265.24 & 0.0058 & 0.0152 & 194.94 & 343 \\
\hline
  & TuRBO & 465.26 & 0.0003 & 0.0145 & 56.29  & 306 \\
9 & RS    & --     & --   & --     & --     & --   \\
  & GS    & --     & --   & --     & --     & --   \\
\hline
  & TuRBO & --     & --   & --     & --     & --   \\
10 & RS    & --     & --   & --     & --     & --   \\
  & GS    & --     & --   & --     & --     & --   \\
\hline
\end{tabular}
\end{table}

As shown in Table~\ref{tab:tsp_results}, across all tested sizes up to \(N{=}8\), TuRBO, RS, and GS successfully found optimal solutions. 
The annealing times differ because GS, once it reaches an optimum from its initial random seed, terminates exploration and does not search for better schedules, whereas RS and BO continue to evaluate additional candidates. 
This continued exploration alters the annealing schedule, which can increase annealing time but also improves success probability. 
Up to \(N{=}6\), all methods maintain successful chain connectivity, but from \(N{=}7\) onward, CBFs become more significant, strongly influencing both the annealing process and the final solution quality. 
At \(N{=}7\), TuRBO achieves a lower CBF than GS, indicating that it can mitigate chain breaks and improve stability at larger problem sizes. 
From $N{=}9$, however, none of the methods were able to find an optimal solution, and at \(N{=}10\) even feasible solutions could not be obtained. 
We defer a detailed discussion of these scalability limitations to Section~\ref{sec:discussion}.  

These results indicate that beyond this point, careful schedule design becomes increasingly important, with TuRBO showing clear advantages by reducing CBF, improving stability, and achieving better overall performance.

\subsection{RQ3: Impact of Reducing the Control Space for Schedule Optimization}
\begin{table}[t]
\centering
\caption{Summary of TSP results across methods (TuRBO, RS, GS) with fixed annealing time ($T_{us}{=}250$) and readout counts ($2000$).}
\label{tab:tsp_results2}
%\scriptsize
\begin{tabular}{c l c c c c c}
\hline
$N$ & Method & Best Energy & $p_{\text{succ}}$ & CBF & $T_{us}$ & $\# ~\text{readout}$ \\
\hline
  & TuRBO & 163.68 & 1.0000 & 0.0000 & 250 & 2000 \\
2 & RS & 163.68 & 1.0000 & 0.0000 & 250 & 2000 \\
  & GS & 163.68 & 1.0000 & 0.0000 & 250 & 2000 \\
\hline
  & TuRBO & 206.91 & 1.0000 & 0.0000 & 250 & 2000 \\
3 & RS & 206.91 & 1.0000 & 0.0000 & 250 & 2000 \\
  & GS & 206.91 & 1.0000 & 0.0000 & 250 & 2000 \\
\hline
  & TuRBO & 238.06 & 0.8518 & 0.0000 & 250 & 2000 \\
4 & RS & 238.06 & 0.8889 & 0.0000 & 250 & 2000 \\
  & GS & 238.06 & 0.8275 & 0.0000 & 250 & 2000 \\
\hline
  & TuRBO & 249.43 & 0.6211 & 0.0041 & 250 & 2000 \\
5 & RS & 249.43 & 0.5252 & 0.0048 & 250 & 2000 \\
  & GS & 249.43 & 0.5816 & 0.0041 & 250 & 2000 \\
\hline
  & TuRBO & 271.75 & 0.2087 & 0.0060 & 250 & 2000 \\
6 & RS & 271.75 & 0.1274 & 0.0098 & 250 & 2000 \\
  & GS & 271.75 & 0.1848 & 0.0063 & 250 & 2000 \\
\hline
  & TuRBO & 327.27 & 0.0446 & 0.0056 & 250 & 2000 \\
7 & RS & 327.27 & 0.0305 & 0.0066 & 250 & 2000 \\
  & GS & 327.27 & 0.0293 & 0.0057 & 250 & 2000 \\
\hline
  & TuRBO & 265.40 & 0.0052 & 0.0058 & 250 & 2000 \\
8 & RS & 265.60 & 0.0031 & 0.0071 & 250 & 2000 \\
  & GS & 273.80 & 0.0041 & 0.0060 & 250 & 2000 \\
\hline
  & TuRBO & 446.98 & 0.0005 & 0.0051 & 250 & 2000 \\
9 & RS & 525.34 & 0.0005 & 0.0056 & 250 & 2000 \\
  & GS & --     & --      & --     & --  & --   \\
\hline
  & TuRBO & 529.65 & 0.0005 & 0.0097 & 250 & 2000 \\
10 & RS & --     & --      & --     & --  & --   \\
  & GS & --     & --      & --     & --  & --   \\
\hline
\end{tabular}
\end{table}

We investigate whether reducing the control space by fixing the annealing time $T$ and the readout counts enables more effective schedule optimization. 
In this setting, TuRBO optimizes only the schedule shape, parameterized by the Fourier coefficients $\boldsymbol{\theta}$, while $T$ and the number of number of reads are held constant. 
All methods are evaluated under identical QPU budgets, and we measure best energy, success probability $p_{\text{succ}}$, and CBF across repetitions.

\begin{figure}[!t]%htp
%begin{center}
%\hspace{-9mm}
\centering
\subfigure[Schedules at $N{=}7$.]{{\includegraphics[width=0.45\textwidth]{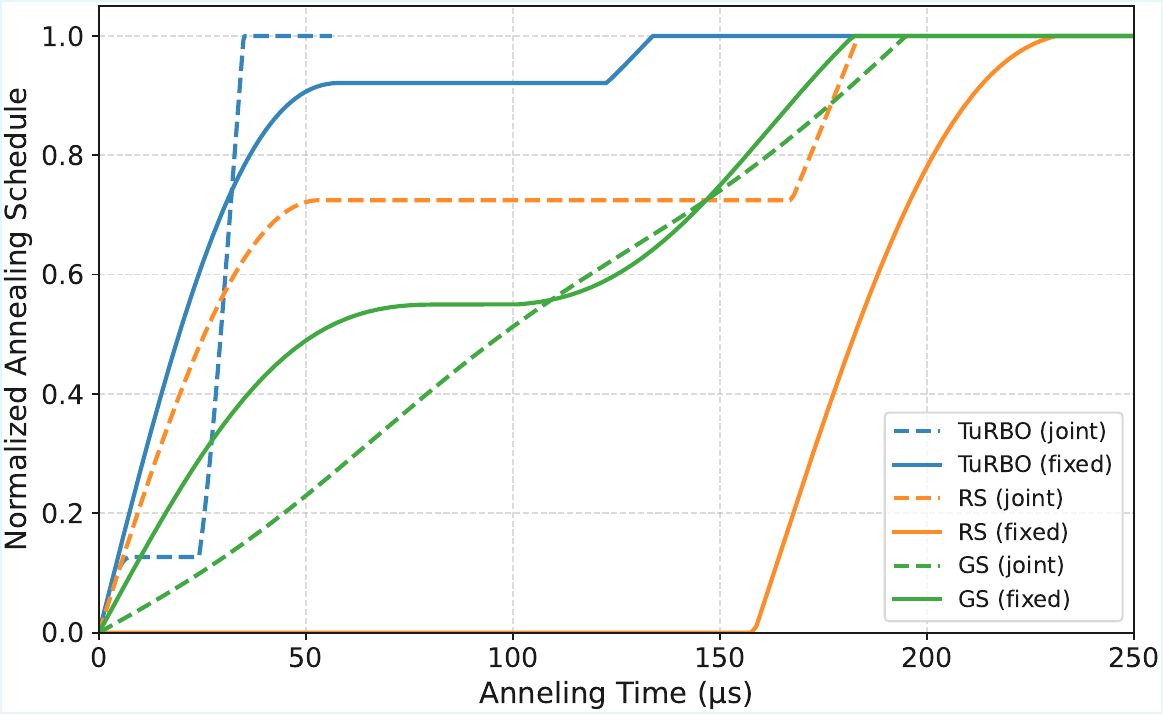} }}%
\label{fig.N6}
%\hspace{-13mm}
\vspace{0cm}
\subfigure[Schedules at $N{=}8$.]{{\includegraphics[width=0.45\textwidth]{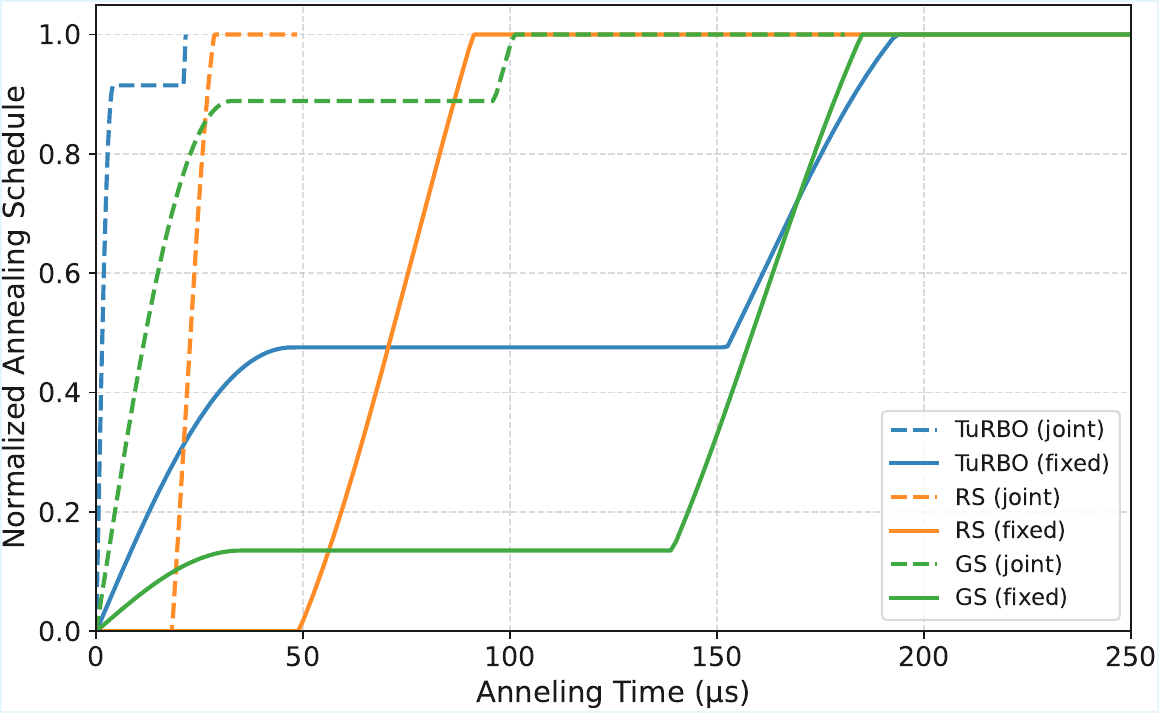} }}%
\label{fig.N7}
%\hspace{-13mm}
%\end{center}
\caption{Comparison of normalized annealing schedules for $N{=}7$ and $N{=}8$ under fixed vs.\ joint optimization across TuRBO, RS, and GS. Line styles distinguish fixed (solid) vs.\ joint (dashed) control, and colors correspond to optimization methods.}% 
\label{fig:schedules_n6n7}%
\end{figure}

As shown in Table~\ref{tab:tsp_results2}, fixing $T$ and number of reads allows all methods to solve instances up to $N{=}10$, which is larger than the cases addressed under joint optimization in Table~\ref{tab:tsp_results}. 
However, while Table~\ref{tab:tsp_results} shows that all methods successfully reached the true optimum at $N{=}8$ under joint optimization, none of the methods achieved the optimum at $N{=}7$ when $T$ and number of reads were fixed. 
From $N{=}9$ onward, only sub-optimal or infeasible solutions were obtained. 
This highlights a clear trade-off: reducing the search space enables exploration of larger problem sizes, but at the cost of losing optimality.

As seen in Tables~\ref{tab:tsp_results} and \ref{tab:tsp_results2}, for $N{=}5$ and $6$, the joint optimization setting exhibits essentially no chain breaks, whereas the chain break occurs in the fixed control setting at a nearly similar level that remains roughly flat as $N$ increases. This pattern suggests that when the annealing time is held constant, schedule-shape adjustments alone have limited leverage over chain connectivity. By contrast, under joint optimization at $N{=}7$ and $8$ (Figure~\ref{fig:schedules_n6n7}), allowing both the annealing time and the schedule shape to vary induces clear method-dependent differences in CBF, indicating that tuning $T$ in tandem with the shape is a key driver for mitigating chain breaks under fixed embedding. 
In particular, Figure~\ref{fig:schedules_n6n7} shows that the joint optimization tends to select shorter annealing times with steeper schedule shapes compared to the fixed setting. 
Such sharper profiles, while enabling faster evolution, appear to correlate with increased CBF, suggesting that the interplay between reduced anneal time and aggressive schedule shaping directly influences chain connectivity under the fixed embedding. This observation is similar to the findings reported in~\cite{jeong2025embeddingawarenoisemodelingquantum}.

These findings indicate that reducing the control space provides a pragmatic path to scale up problem size, but it does not guarantee optimality; joint optimization of both $T$ and schedule shape remains important for suppressing chain breaks and attaining true optima under larger instances.

\subsection{RQ4: Runtime and Optimality Gap Comparison}
\begin{table}[t]
\centering
\caption{Performance across methods with gap to CPLEX optimal (\%). Computation time in ms.}
\label{tab:fair_comp}
\begin{tabular}{c l c c c}
\hline
$N$ & Method & Gap (\%) & Best energy & Time (ms) \\
\hline
2 & SA     & 0.00 & 163.68 &  52.28 \\
  & GA     & 0.00 & 163.68 & 2408.36 \\
  & CPLEX  & 0.00 & 163.68 &  15.26 \\
  & QA (TuRBO)     & 0.00 & 163.68 &   5.00 \\
\hline
3 & SA     & 0.00 & 206.91 & 203.11 \\
  & GA     & 0.00 & 206.91 & 2480.26 \\
  & CPLEX  & 0.00 & 206.91 &  12.04 \\
  & QA (TuRBO)     & 0.00 & 206.91 &   5.00 \\
\hline
4 & SA     & 0.00 & 238.06 & 308.64 \\
  & GA     & 0.00 & 238.06 & 2539.41 \\
  & CPLEX  & 0.00 & 238.06 &  32.95 \\
  & QA (TuRBO)     & 0.00 & 238.06 &   5.00 \\
\hline
5 & SA     & 0.00 & 249.43 & 469.79 \\
  & GA     & 0.00 & 249.43 & 2568.70 \\
  & CPLEX  & 0.00 & 249.43 &  47.22 \\
  & QA (TuRBO)     & 0.00 & 249.43 &   5.00 \\
\hline
6 & SA     & 14.99 & 312.49 & 702.76 \\
  & GA     &  0.00 & 271.75 & 2620.02 \\
  & CPLEX  &  0.00 & 271.75 & 121.40 \\
  & QA (TuRBO)     &  0.00 & 271.75 &   5.00 \\
\hline
7 & SA     & 0.69 & 329.52 &  844.15 \\
  & GA     & 0.00 & 327.27 & 2665.16 \\
  & CPLEX  & 0.00 & 327.27 &  175.05 \\
  & QA (TuRBO)     & 0.00 & 327.27 &    14.07 \\
\hline
8 & SA     & 0.06 & 265.40 & 1325.17 \\
  & GA     & 0.00 & 265.24 & 2696.38 \\
  & CPLEX  & 0.00 & 265.24 &  111.87 \\
  & QA (TuRBO)     & 0.00 & 265.24 &    6.62 \\
\hline
9 & SA     & 0.86 & 308.46 & 1510.34 \\
  & GA     & 0.00 & 305.84 & 2742.69 \\
  & CPLEX  & 0.00 & 305.84 &  196.09 \\
  & QA (TuRBO)     & 46.15 & 446.98 &  500.00 \\
\hline
10 & SA     & 12.34 & 385.15 & 2265.11 \\
  & GA     &  0.00 & 342.85 & 2773.54 \\
  & CPLEX  &  0.00 & 342.85 &  544.94 \\
  & QA (TuRBO)     & 54.48 & 529.65 &  500.00 \\
\hline
\end{tabular}
\end{table}

We evaluate SA, GA, CPLEX, and the proposed QA (TuRBO) on generated TSP instances. The optimality gap is defined as
\[
\mathrm{Gap}(\%) \;=\; 100\times\frac{E_{\text{method}} - E_{\text{CPLEX}}}{E_{\text{CPLEX}}}\!,
\]
where $E_{\text{method}}$ denotes the best objective obtained by each method 
and $E_{\text{CPLEX}}$ denotes the optimal objective found by CPLEX. For QA (TuRBO), the reported run time corresponds to the defined TTS metric.

Table~\ref{tab:fair_comp} reports the best energy, the gap with respect to CPLEX, and the run time (ms). Across small-medium sizes ($N{=}2$-$5$), all methods reach the CPLEX optimum ($0$\% gap). 
The QA (TuRBO) consistently solves these instances with nearly constant annealing times ($5$\,ms), in contrast to SA (tens to hundreds of milliseconds), GA (several seconds), and CPLEX (tens of milliseconds).

For $N{=}6$-$8$, QA (TuRBO) remains optimal (0\% gap); by contrast, SA shows a $14.99$\% gap at $N{=}6$ and small nonzero gaps at $N{=}7$ and $8$. GA and CPLEX remain optimal in this range. Remarkably, QA (TuRBO) maintains short and stable runtimes ($5.00$-$14.07$\,ms), whereas SA increases to the order of seconds, GA remains around $2.6$-$2.7$\,s, and CPLEX requires $111.87$-$175.05$\,ms.

For larger instances ($N{=}9$ and $10$), QA (TuRBO) exhibits sizable optimality gaps ($46.15$\% and $54.48$\%), while GA and CPLEX remain optimal, and SA is closer to optimal than QA (TuRBO). This degradation aligns with hardware embedding limits and the difficulty of maintaining chain connectivity and adiabaticity as problem size grows. Notably, QA (TuRBO) still executes quickly ($500$\,ms) but yields suboptimal energies.

%These results indicate that the proposed QA offers nearly constant runtime and high solution quality up to $N{\approx}7$. Beyond this scale, a clear quality-speed tradeoff emerges: QA preserves speed but loses optimality, whereas classical methods, especially CPLEX, retain solution quality at higher computational cost. 
These results indicate that within the embedding-feasible range, properly tuned annealing schedules enable proposed QA to solve instances with nearly constant, polynomial-scale effort and superior runtime compared to classical algorithms. However, on current hardware, once clique embeddings approach their maximum capacity, proposed QA fails to recover the optimal solution. We further elaborate on these limitations in Section~\ref{sec:discussion}.

\section{Discussion}\label{sec:discussion}

In Section~\ref{sec:Exp}, the failure to obtain optimal solutions from \(N{=}9\) and the absence of feasible solutions at \(N{=}10\) highlight the scalability limitations of the current approach on physical quantum annealers. 
We attribute these difficulties primarily to two factors: (i) the embedding overhead required to map the fully connected TSP QUBO onto the sparse Zephyr topology, which increases chain lengths and susceptibility to errors, and (ii) noise effects intrinsic to the hardware, which further degrade solution quality as problem size grows. 
While TuRBO improves schedule quality up to \(N{=}8\), beyond this scale the combined impact of embedding and noise appears to dominate performance. %To investigate the impact of noise on scalability limitations, we begin with a theoretical analysis of the scaling effects introduced by clique embedding. This provides a quantitative understanding of how chain length and embedding overhead grow with problem size. Building on this foundation, we then compare noise-free SQA with noisy simulations.
%To investigate the noise effect on scalability limitation, we first analyze the scale effects under clique embedding. After that, we compare the noise-free simulation results with those from the noise simulation. 
To investigate the impact of noise on scalability limitations, we compare noise-free SQA with noisy simulations.
All subsequent experiments are conducted under TuRBO-based optimization of the annealing schedule.
%Note that all experiments are applied to TuRBO-based optimization of the annealing schedule. 

\begin{comment}
\subsection{Scaling Effects (Physical Insight Adding)}
%Let $N$ be the number of customer cities and $L=(N{+}1)^2$ the number of logical variables in the QUBO model.
%Under all-to-all connectivity, we embed the clique $K_L$ on Zephyr. 

\textcolor{blue}{
\subsubsection{Chain length scaling}
Let $\ell_v$ denote the chain length of logical variable $v\in\{1,\dots,L\}$ and define the mean chain length as $\ell(N)=\tfrac{1}{L}\sum_{v=1}^L \ell_v$. 
For the TSP assignment QUBO with $L=(N{+}1)^2$ logical variables, clique embedding on Zephyr induces chain lengths that grow with the linear dimension of the logical complete graph. 
Formally, it can be shown (see Appendix~A) that
\begin{align}
\ell(N) = \Theta(N) = \Theta(\sqrt{L}),  
\end{align}
\begin{align}
Q_{\mathrm{phys}}(N) &= L \cdot \ell(N) = \Theta(N^3) = \Theta(L^{3/2}),
\end{align}
where $Q_{\mathrm{phys}}(N)$ denotes the total number of physical qubits used in the embedding. 
Empirically (Table~I), these scaling laws are clearly reflected: the mean chain length increases from $1.0$ at $N{=}1$ to $11.5$ at $N{=}9$, while the number of unique physical qubits grows from $4$ to $1151$. 
This trend is consistent with prior benchmarking of clique embeddings on D-Wave hardware~\cite{grant2022benchmarking,pelofske2025comparing}. 
The rapid growth of both $\ell(N)$ and $Q_{\mathrm{phys}}(N)$ underscores why scalability is fundamentally limited under all-to-all embeddings: longer chains amplify hardware noise and increase chain-break probability, while the cubic growth of physical resources quickly saturates device capacity.
}
\end{comment}

%\subsection{Noise Experiments}

\subsubsection*{Noise-free Simulation}
To investigate the noise effect, a useful next step would be to compare the hardware runs against simulated QA (SQA) on identical QUBO formulations~\cite{king2021scaling}. 
Such simulations provide a noise-free baseline that replicates the annealing dynamics without embedding-related constraints, allowing us to disentangle the algorithmic contribution of annealing schedules from hardware-induced errors~\cite{albash2018demonstration, bando2021simulated}. 

\subsubsection*{Noise Simulation}
\label{sec:noise-sim}

We model analog control errors in QA as perturbations of the problem Hamiltonian. 
Following D-Wave's ICE model, the QPU effectively solves a perturbed Ising Hamiltonian~\cite{yarkoni2022quantum, jeong2025embeddingawarenoisemodelingquantum}:
\begin{equation}
E^{\delta}(\boldsymbol{x}) = \sum_{i=1}^{N} \!\bigl(h_i+\delta h_i\bigr)x_i 
+ \sum_{i<j} \!\bigl(J_{ij}+\delta J_{ij}\bigr)x_i x_j,
\end{equation}
where $\delta h_i$ and $\delta J_{ij}$ denote deviations between programmed and realized coefficients. 
Such ICEs include several contributing factors, such as coefficient noise, persistent biases, and background susceptibility~\cite{dwave}. 
For simplicity, in this work we explicitly consider only two dominant manifestations: (i)~\textit{misspecification errors}, i.e., small perturbations of existing $h$ and $J$, and (ii)~\textit{ghost couplings} (background susceptibility), i.e., spurious next-nearest-neighbor terms that effectively introduce unintended interactions.
Accordingly, we model the total perturbation of couplers as
\begin{equation}
\delta J_{ij} = \delta J_{ij}^{\mathrm{misspec}} + \delta J_{ij}^{\mathrm{ghost}},
\end{equation}
while local fields are perturbed only through misspecification errors, $\delta h_i = \delta h_i^{\mathrm{misspec}}$.

\subsubsection*{Misspecification Error}
To emulate these hardware-scale imperfections while keeping problem costs interpretable, 
we inject relative Gaussian disorder on $h$ and $J$ with respect to the current BQM scale:
\begin{align}
\delta h_i^{\mathrm{misspec}} &\sim \mathcal{N}\!\bigl(0,\, \sigma_h\bigr), 
& \sigma_h &= \sigma_{h}^{\mathrm{rel}}\cdot \max_k |h_k|,\\
\delta J_{ij}^{\mathrm{misspec}} &\sim \mathcal{N}\!\bigl(0,\, \sigma_J\bigr), 
& \sigma_J &= \sigma_{J}^{\mathrm{rel}}\cdot \max_{u<v} |J_{uv}|.
\end{align}
Unless stated otherwise, we use $(\sigma_{h}^{\mathrm{rel}},\sigma_{J}^{\mathrm{rel}})=(0.05,\,0.02)$, 
corresponding to $5\%$ and $2\%$ of the dynamic ranges of $h$ and $J$, respectively~\cite{yarkoni2022quantum}. 

\subsubsection*{Ghost Couplings}
To capture background susceptibility and crosstalk, 
we randomly add couplings on a fraction $\rho_{\text{ghost}}$ of absent edges $(i,j)\notin\mathcal{E}$:
\begin{equation}
\delta J^{\mathrm{ghost}}_{ij} \sim \mathcal{N}\!\bigl(0,\, \sigma_{\text{ghost}}\bigr), 
\qquad \sigma_{\text{ghost}}=\sigma_{\text{ghost}}^{\mathrm{rel}}\cdot \max_{u<v}|J_{uv}|.
\end{equation}
We set $\rho_{\text{ghost}}=0.02$ and $\sigma_{\text{ghost}}^{\mathrm{rel}}=0.01$, 
thereby incorporating ICE background susceptibility into our SQA trials~\cite{dwave}.  

\begin{remark}
Our objective is to investigate schedule optimization under realistic analog uncertainty. 
We do not attempt to replicate device-specific error distributions; instead, we adopt a principled parametric proxy rooted in D-Wave’s ICE characterization 
(misspecification $\delta h,\delta J$ and ghost couplings) and in standard SQA methodology. 
\end{remark}

\begin{table}[t]
\centering
\caption{Comparison of SQA under noiseless and noisy settings, showing the gap to CPLEX optimal (\%), success probability, and best energy across different $N$.}
\label{tab:fair_comp1}
\begin{tabular}{c l c c c}
\hline
$N$ & Method & Gap (\%) & $p_{\text{succ}}$ & Best energy  \\
\hline
2 & SQA(Noiseless) & 0.00 & 1.0000 & 163.68 \\
  & SQA(Noise)     & 0.00 & 1.0000 &   163.68 \\
  %& QA             & 0.00 & 163.68 &   48.74 \\
\hline
3 & SQA(Noiseless) & 0.00 & 1.0000 & 206.91 \\
  & SQA(Noise)     & 0.00 & 0.9977 &   206.91 \\
  %& QA             & 0.00 & 206.91 &   48.74 \\
\hline
4 & SQA(Noiseless) & 0.00 & 1.0000 & 238.06 \\
  & SQA(Noise)     & 0.00 & 0.8938 &   238.06 \\
  %& QA             & 0.00 & 238.06 &   48.74 \\
\hline
5 & SQA(Noiseless) & 0.00 & 1.0000 & 249.43 \\
  & SQA(Noise)     & 0.00 & 0.8070 &   249.43 \\
  %& QA             & 0.00 & 249.43 &   48.74 \\
\hline
6 & SQA(Noiseless) & 0.00 & 1.0000 & 271.75 \\
  & SQA(Noise)     & 0.00 & 0.6978 &   271.75 \\
  %& QA             & 0.00 & 271.75 &   48.74 \\
\hline
7 & SQA(Noiseless) & 0.00 & 1.0000 & 327.27 \\
  & SQA(Noise)     & 0.00 & 0.6035 &   327.27 \\
  %& QA             & 0.00 & 327.27 &   14.07 \\
\hline
8 & SQA(Noiseless) & 0.00 & 1.0000 & 265.24 \\
  & SQA(Noise)     & 0.00 & 0.5562 &   265.24 \\
  %& QA             & 0.00 & 265.24 &    6.62 \\
\hline
9 & SQA(Noiseless) & 0.00 & 1.0000 & 305.84 \\
  & SQA(Noise)     & 1.14 & 0.5180 &   309.33 \\
  %& QA             & 46.15 & 446.98 &   17.22 \\
\hline
10 & SQA(Noiseless) & 0.00 & 1.0000 & 342.85 \\
  & SQA(Noise)     & 10.93 & 0.0558 &   380.33 \\
  %& QA             & 54.48 & 529.65 &  500.00 \\
\hline
\end{tabular}
\end{table}

As shown in Table~\ref{tab:fair_comp1}, in all cases, noiseless SQA consistently achieved the optimal solution with zero optimality gap and unity success probability. 
In contrast, high performance in the noisy setting held only up to $N{=}8$. At $N{=}9$ the best energy was already suboptimal (gap $\approx 1.14\%$), and by $N{=}10$ the gap widened to $\approx 10.93\%$ with success probability nearly zero.

Figure~\ref{fig:dissresult} illustrates the difference in optimization dynamics. At $N{=}9$, noisy SQA exhibits fluctuations in energy across 100 iterations due to noise, while noiseless SQA remains stable near the optimum. At $N{=}10$, noisy SQA converges to a suboptimal solution with fluctuation, whereas noiseless SQA continues to locate the true ground state. The success probability plots further emphasize this divergence: noiseless SQA maintains $p_{\text{succ}}=1$, while noisy SQA decays with $N$ and nearly vanishes at $N{=}10$.

Overall, the comparative experiments between noiseless and noisy SQA reveal three key findings. 
First, both settings successfully reached optimal solutions up to $N{=}9$, but at $N{=}10$ noisy SQA failed to find the optimum and converged to suboptimal energies, whereas noiseless SQA continued to explore the ground state. 
Second, the success probability in the noiseless case remained consistently at~1 across all tested sizes, while in the noisy case it degraded steadily with $N$ and collapsed to nearly~0 at $N{=}10$. 
Third, the energy trajectories under TuRBO optimization (Fig.~\ref{fig:dissresult}) illustrate that noise induces significant fluctuations at $N{=}9$ and prevents convergence at $N{=}10$, in contrast to the stable convergence of the noiseless runs. In particular, the noisy SQA results highlight that noise strongly amplifies with increasing QUBO size, i.e., larger embedding sizes, which emerges as a primary scalability limitation. Hence, future directions should emphasize noise-reduction techniques that explicitly account for embedding overhead~\cite{raymond2025quantum, shingu2024quantum}.

\begin{figure}[!t]
\centering
\subfigure[Energy at $N{=}9$.]{%
    \includegraphics[width=0.24\textwidth]{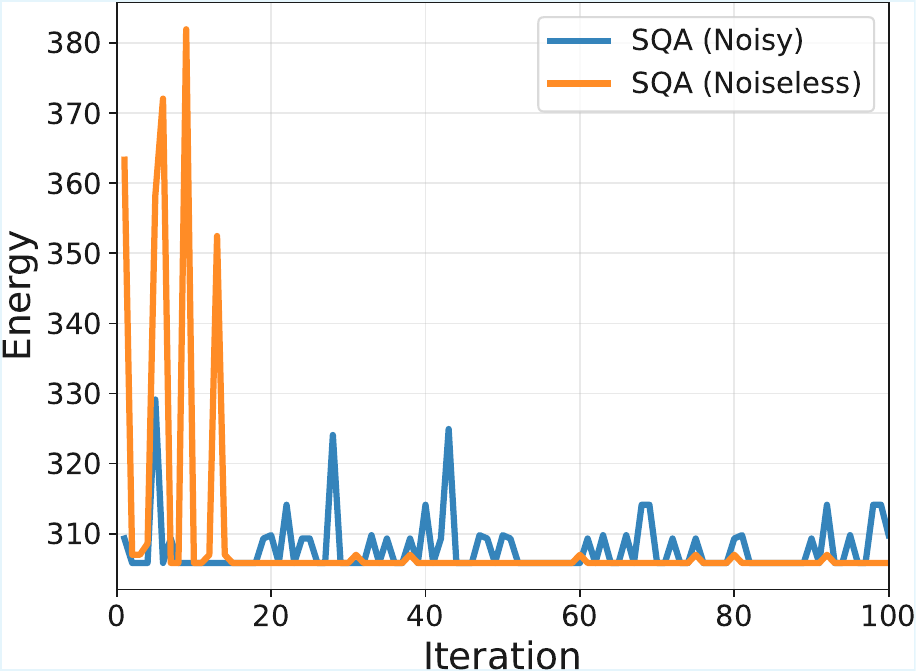}%
    \label{fig.N6}}
\subfigure[Energy at $N{=}10$.]{%
    \includegraphics[width=0.24\textwidth]{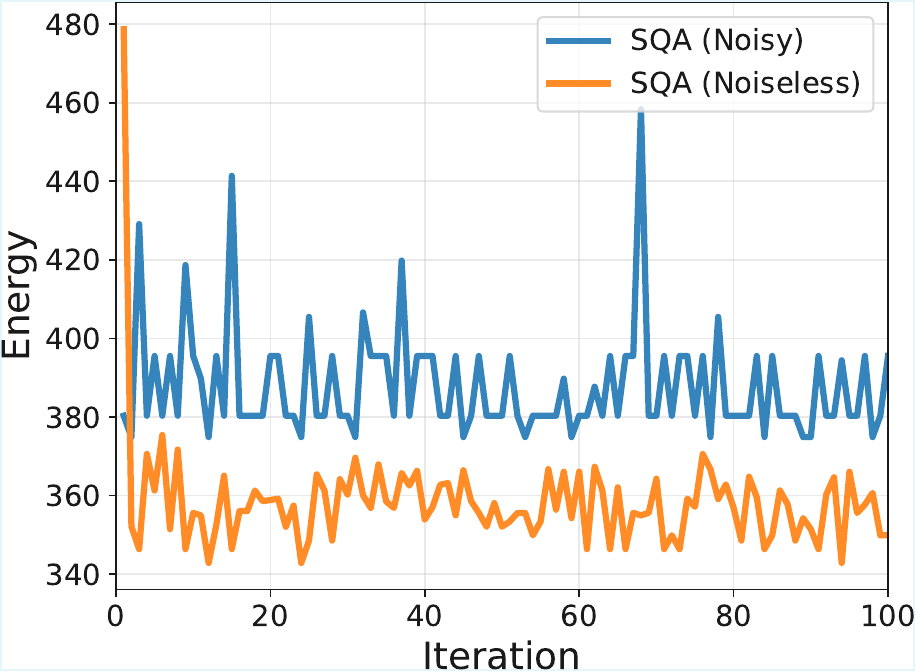}%
    \label{fig.N7}} \\[1ex]
\subfigure[Success probability at $N{=}9$.]{%
    \includegraphics[width=0.24\textwidth]{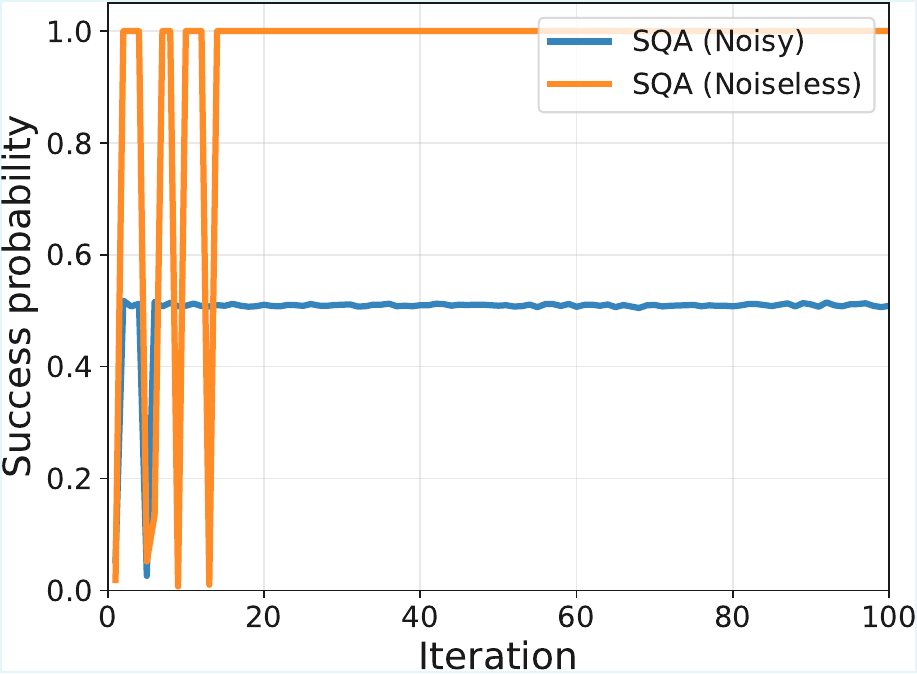}%
    \label{fig.N8}}
\subfigure[Success probability at $N{=}10$.]{%
    \includegraphics[width=0.24\textwidth]{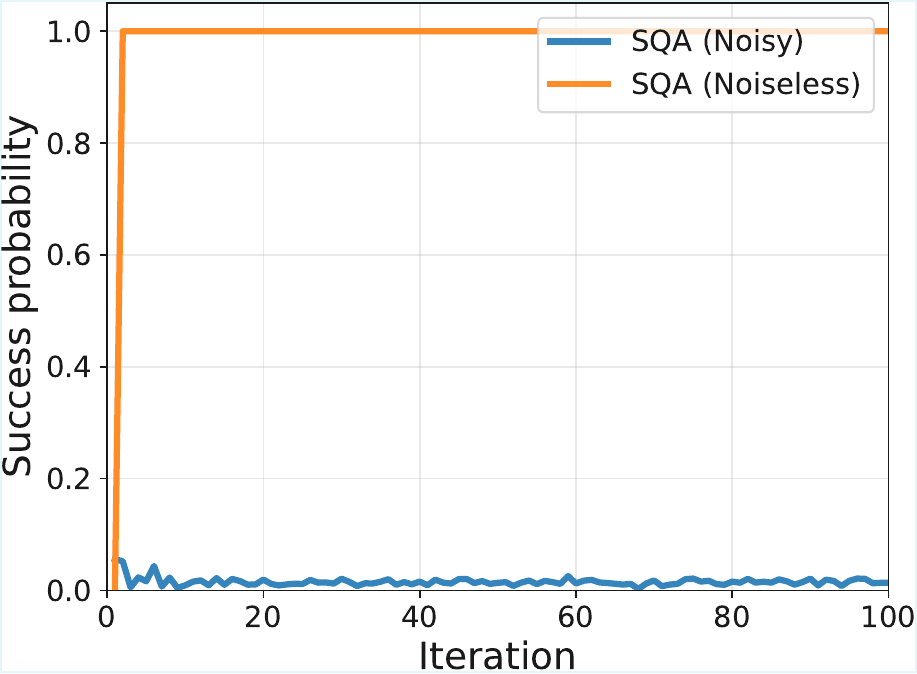}%
    \label{fig.N9}}
\caption{Comparison of optimization performance between noisy and noiseless SQA at $N{=}9$ and $N{=}10$. Subplots (a) and (b) show the evolution of energy across iterations, while (c) and (d) present the corresponding success probabilities.}
\label{fig:dissresult}
\end{figure}

\section{Conclusion}\label{sec:Con}
In this work, we proposed a TuRBO framework for annealing schedule design on a quantum annealer. By jointly optimizing annealing time and Fourier-parameterized schedules under fixed embeddings, our method consistently improved solution quality compared to random and greedy baselines. Through extensive experiments on TSP instances embedded on the D-Wave Advantage2 system, we demonstrated clear gains in energy quality, success probability, and chain break mitigation. Moreover, runtime analysis confirmed that properly tuned schedules enable QA to achieve competitive or superior TTS compared to classical baselines within the embedding-feasible range.
We also investigated scalability limits by contrasting noiseless and noisy SQA. The results reveal that while optimized schedules are highly effective up to moderate sizes ($N{\leq}7$), performance degrades beyond this point due to embedding overhead and hardware-induced noise. In particular, the collapse of success probability and the emergence of suboptimal convergence at $N{=}9$ highlight embedding-related error amplification as a key bottleneck.
Overall, our study bridges the gap between theoretical schedule optimization and real hardware deployment. It establishes TuRBO as a resource-efficient strategy for schedule design in NISQ regimes. Future work should focus on embedding-aware noise mitigation techniques and hybrid frameworks that combine optimized schedules with error correction or postprocessing, thereby extending scalability and robustness for larger real-world applications.
%\balance

\bibliographystyle{IEEEtran}
\bibliography{bibliography}

\clearpage
%\balance
\appendices

\section{Proof of Theorem~\ref{thm:turbo-complexity}}
\label{appendix:turbo-proof}

We analyze TuRBO in our setting where the design vector 
$\boldsymbol{s}=(T,\theta_1,\dots,\theta_M)$ specifies the Fourier schedule (Section~\ref{sec:met}) 
and the black-box objective is
\[
f(\boldsymbol{s}) \equiv E(\boldsymbol{s}) 
= \min_{r=1,\dots,R} H\!\big(\boldsymbol{x}^{(r)};\boldsymbol{s}\big).
\]
%cf.~\eqref{eq:objective-energy}. 
One QPU evaluation corresponds to executing the annealer with schedule $\boldsymbol{s}$, collecting $R$ number of reads, and returning $\min_r H(\boldsymbol{x}^{(r)};\boldsymbol{s})$.
%We assume single-point TuRBO (no batch): each iteration proposes one candidate.

\subsection{Per-iteration time and memory}

\begin{lemma}[Per-TR GP training cost]
\label{lem:gp-cost-app}
For a TR $j$ with $n_j \le n_{\max}$ points, training the GP surrogate requires 
Cholesky factorization of the $n_j \times n_j$ covariance matrix constructed with the 
Mat\'ern-5/2 kernel. 
This incurs $\mathcal{O}(n_j^3)$ time and $\mathcal{O}(n_j^2)$ memory, 
consistent with standard results on GP training complexity~\cite{giovanis2020data,eriksson2018scaling}.
\end{lemma}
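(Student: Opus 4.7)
The plan is to establish both the time and memory bounds by reducing GP training in a single trust region to a standard dense linear algebra problem, for which the cost of Cholesky factorization of a symmetric positive-definite matrix is classical. I would start by making the GP fitting step explicit in our TuRBO setting: given the $n_j$ schedule-energy pairs $\{(\boldsymbol{s}_i, E(\boldsymbol{s}_i))\}_{i=1}^{n_j}$ inside TR $j$, fitting the surrogate amounts to assembling the noisy kernel Gram matrix
\[
K_j \in \mathbb{R}^{n_j \times n_j}, \qquad (K_j)_{ab} = \sigma^2 k_{\text{Mat\'ern 5/2}}(\boldsymbol{s}_a,\boldsymbol{s}_b) + \sigma_n^2 \delta_{ab},
\]
and then using $K_j$ both to evaluate the marginal log-likelihood for hyperparameter selection and to compute posterior means and variances at candidate points via the formulas summarized in Section~\ref{sec:met}.

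Next I would bound the cost of assembling $K_j$. Each Mat\'ern-5/2 evaluation requires only a scaled Euclidean distance in $\mathbb{R}^{p}$ with $p = M+1$, hence takes $\mathcal{O}(p)$ time; exploiting symmetry, assembly of all entries costs $\mathcal{O}(n_j^2 p)$. Because $p$ is a fixed, moderate constant in the design vector of~\eqref{eq:design-vector}, this is dominated by the factorization stage. I would then invoke the classical result that a dense symmetric positive-definite $n_j \times n_j$ Cholesky decomposition $K_j = L_j L_j^\top$ requires $\tfrac{1}{3} n_j^3 + \mathcal{O}(n_j^2)$ floating-point operations and $\mathcal{O}(n_j^2)$ words of storage. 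Once $L_j$ is cached, computing $\alpha = K_j^{-1} y$ through two triangular solves costs $\mathcal{O}(n_j^2)$, as does extracting $\log|K_j|$ from the diagonal of $L_j$; gradients of the marginal likelihood with respect to each kernel hyperparameter (signal variance, noise variance, and ARD length-scales) add only an additional $\mathcal{O}(n_j^2)$ per parameter once $L_j$ is available, so each likelihood-plus-gradient evaluation remains $\mathcal{O}(n_j^3)$.

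The subtlety that I expect to be the main obstacle in a fully rigorous writeup is absorbing hyperparameter optimization into the stated per-iteration bound. Each outer optimizer step (e.g., L-BFGS on the marginal likelihood) triggers a fresh Cholesky of $K_j$, so the nominal training cost is $\mathcal{O}(N_{\mathrm{hyp}}\, n_j^3)$, where $N_{\mathrm{hyp}}$ is the number of inner iterations. I would handle this by adopting the standard convention that $N_{\mathrm{hyp}}$ is bounded by a constant independent of $n_j$ (enforced by a maximum-iterations cap, as commonly done in GP software), so it can be hidden in the $\mathcal{O}(\cdot)$ constant; I would flag this explicitly so that summing Lemma~\ref{lem:gp-cost-app} over the $M_{\mathrm{TR}}$ trust regions and using $n_j \le n_{\max}$ directly yields the $\mathcal{O}(M_{\mathrm{TR}} n_{\max}^3)$ surrogate-training term of Theorem~\ref{thm:turbo-complexity}. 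For the memory bound, it suffices to observe that only $K_j$ and its Cholesky factor $L_j$ need to be retained in the main workspace, in addition to the $\mathcal{O}(n_j p)$ storage of observed inputs, which gives the stated $\mathcal{O}(n_j^2)$ bound once the constant-$p$ contribution is absorbed.
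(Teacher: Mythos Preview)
Your proposal is correct and follows essentially the same approach as the paper: both identify GP training with the Cholesky factorization of the $n_j\times n_j$ Mat\'ern-5/2 kernel Gram matrix and invoke the classical $(1/3)n_j^3+\mathcal{O}(n_j^2)$ flop and $\mathcal{O}(n_j^2)$ storage bounds. Your writeup is considerably more detailed than the paper's three-sentence proof (you additionally bound assembly cost, downstream triangular solves, and the hyperparameter-optimization overhead), but the core argument is identical.
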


\begin{proof}
The covariance matrix entries are defined by the Mat\'ern-5/2 kernel,
$K_{uv} = k_{\text{Mat\'ern 5/2}}(\boldsymbol{s}_u,\boldsymbol{s}_v)$, 
for all data points $\{\boldsymbol{s}_1,\dots,\boldsymbol{s}_{n_j}\}$ in the trust region. 
Cholesky factorization of this dense matrix requires $(1/3)n_j^3 + \mathcal{O}(n_j^2)$ floating-point operations, 
while storing the matrix and its factor requires $\mathcal{O}(n_j^2)$ memory. 
\end{proof}

\begin{lemma}[Acquisition maximization cost]
\label{lem:acq-cost-app}
Let $C_{\mathrm{acq}}(d,n_j)$ denote the expected cost of performing one local acquisition 
function maximization in dimension $d$ when the underlying GP surrogate in trust region $j$ 
is trained on $n_j$ points. Evaluating the acquisition function and its gradient requires 
posterior mean and variance computations, each of which costs $\mathcal{O}(n_j^2)$ using cached 
Cholesky factors~\cite{williams2006gaussian,shahriari2015taking}. If the optimizer performs 
$I(d)$ inner iterations on average, then
\[
C_{\mathrm{acq}}(d,n_j) = \mathcal{O}(I(d)\,n_j^2).
\]
With $R_{\mathrm{acq}}$ random restarts, the total acquisition maximization cost for TR $j$ is
\[
\mathcal{O}\!\big(R_{\mathrm{acq}}\,I(d)\,n_j^2\big).
\]
\end{lemma}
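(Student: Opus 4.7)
The plan is to derive the per-evaluation cost of the acquisition function and its gradient assuming the GP surrogate of trust region $j$ has already been trained (Lemma~\ref{lem:gp-cost-app}), and then to aggregate this cost across the inner iterations of one local solver and across the $R_{\mathrm{acq}}$ restarts. The argument decomposes naturally into three layers: (i) a single posterior query, (ii) a single EI function-and-gradient evaluation, and (iii) the full multi-start maximization loop. I would present the bounds in this order so that each layer plugs cleanly into the next.

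First, I would recall the GP predictive formulas $\mu_n(\boldsymbol{s}) = \boldsymbol{k}_*^\top K^{-1}\boldsymbol{y}$ and $v_n^2(\boldsymbol{s}) = k(\boldsymbol{s},\boldsymbol{s}) - \boldsymbol{k}_*^\top K^{-1}\boldsymbol{k}_*$, where $\boldsymbol{k}_*\in\mathbb{R}^{n_j}$ collects kernel values between the candidate point $\boldsymbol{s}$ and the $n_j$ training points in the trust region. Because Lemma~\ref{lem:gp-cost-app} already produces the Cholesky factor $L$ with $K=LL^\top$ at a one-time training cost of $\mathcal{O}(n_j^3)$, the weight vector $\boldsymbol{\alpha}=K^{-1}\boldsymbol{y}$ can be cached once, reducing the mean to an inner product in $\mathcal{O}(n_j)$ and the variance to a single triangular solve $L\boldsymbol{u}=\boldsymbol{k}_*$ in $\mathcal{O}(n_j^2)$. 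The construction of $\boldsymbol{k}_*$ itself costs $\mathcal{O}(d\,n_j)$ kernel evaluations, which is dominated by $\mathcal{O}(n_j^2)$ in the regime $d \le n_j$ of interest here.

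Next, I would lift these bounds to a single EI call. Since $\mathrm{EI}_n$ is an elementary closed-form function of $(\mu_n,v_n)$ through the standard normal CDF and PDF, one EI evaluation inherits the $\mathcal{O}(n_j^2)$ bound. For the gradient, the chain rule expresses $\nabla_{\boldsymbol{s}}\mathrm{EI}$ in terms of $\nabla_{\boldsymbol{s}}\mu_n$ and $\nabla_{\boldsymbol{s}}v_n$, both of which reduce to coordinate derivatives of $\boldsymbol{k}_*$ against the Matérn-$5/2$ kernel. The triangular solve $L\boldsymbol{u}=\boldsymbol{k}_*$ can be reused across all $d$ coordinates, so the gradient costs $\mathcal{O}(n_j^2 + d\,n_j)=\mathcal{O}(n_j^2)$ under the same regime. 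A quasi-Newton inner solver such as L-BFGS-B then requests $I(d)$ combined function-and-gradient evaluations on average, producing a per-restart cost of $\mathcal{O}(I(d)\,n_j^2)$. Summing over $R_{\mathrm{acq}}$ independent restarts yields the stated aggregate bound $\mathcal{O}(R_{\mathrm{acq}}\,I(d)\,n_j^2)$.

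The main obstacle is handling the dimension-dependent bookkeeping rigorously: the gradient contains up to $d$ coordinate derivatives, and $I(d)$ itself is only an expected value that can drift mildly with $d$ for quasi-Newton schemes. I would resolve this by formalizing that the dominant triangular solve is coordinate-independent and executed once per evaluation point, so the $\mathcal{O}(d\,n_j)$ contribution is strictly lower order whenever $d \le n_j$ — a condition that always holds in our setting because $d=M+1$ is fixed and small while the trust-region sample size $n_j$ grows up to $n_{\max}$. Under this standard regime assumption all dimension factors are absorbed into $I(d)$, and both complexity bounds in the lemma follow.
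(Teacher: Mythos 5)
Your proposal is correct and follows essentially the same route as the paper's proof: cached Cholesky factors make each posterior (and hence EI) evaluation cost $\mathcal{O}(n_j^2)$, a gradient-based local solver performs $I(d)$ such evaluations per restart, and multiplying by $R_{\mathrm{acq}}$ restarts gives the bound. Your version is simply more detailed --- explicitly accounting for the $\mathcal{O}(d\,n_j)$ kernel-vector and gradient terms and the regime $d \le n_j$ under which they are absorbed --- whereas the paper leaves these points implicit.
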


\begin{proof}
GP prediction at a new candidate requires computing the posterior mean and 
variance by solving triangular systems with the cached Cholesky factors of the covariance matrix. 
As noted in~\cite{williams2006gaussian,shahriari2015taking}, this costs $\mathcal{O}(n_j^2)$ per evaluation. 
A gradient-based optimizer in $d$ dimensions evaluates the acquisition and its gradient multiple 
times, with an expected $I(d)$ iterations. Thus the cost per restart is $\mathcal{O}(I(d)\,n_j^2)$, and 
multiplying by $R_{\mathrm{acq}}$ restarts gives the stated bound.
\end{proof}

\begin{proposition}[Per-iteration time and memory complexity]
\label{prop:per-iter-app}
Consider TuRBO with $M_{\mathrm{TR}}$ TRs, each storing at most $n_{\max}$ data points. 
In each iteration, every TR (i) refits its local GP surrogate and (ii) optimizes the acquisition function 
with $R_{\mathrm{acq}}$ random restarts in dimension $d$.

\begin{itemize}
    \item \textbf{GP training:} Refitting one GP costs $\mathcal{O}(n_j^3)$ time and $\mathcal{O}(n_j^2)$ memory for TR $j$.
    \item \textbf{Acquisition maximization:} Optimizing the acquisition costs 
    $\mathcal{O}(R_{\mathrm{acq}} I(d)\,n_j^2)$ time for TR $j$, 
    where $I(d)$ is the expected number of inner iterations of the gradient-based optimizer in dimension $d$.
\end{itemize}

Summing across all trust regions, the total per-iteration complexity is
\begin{equation}
\begin{aligned}
T_{\mathrm{iter}}
&= \mathcal{O}\left(\sum_{j=1}^{M_{\mathrm{TR}}} \left(n_j^3 + R_{\mathrm{acq}} I(d)\,n_j^2\right)\right), \\
S_{\mathrm{iter}}
&= \mathcal{O}\left(\sum_{j=1}^{M_{\mathrm{TR}}} n_j^2\right).
\end{aligned}
\end{equation}

If each $n_j \le n_{\max}$, this simplifies to
\begin{equation}
\begin{aligned}
T_{\mathrm{iter}}
&= \mathcal{O}\!\left(M_{\mathrm{TR}}\,n_{\max}^3 
   + R_{\mathrm{acq}} I(d)\,M_{\mathrm{TR}}\,n_{\max}^2\right), \\
S_{\mathrm{iter}}
&= \mathcal{O}\!\left(M_{\mathrm{TR}}\,n_{\max}^2\right).
\end{aligned}
\end{equation}

Here, $T_{\mathrm{iter}}$ denotes the per-iteration time complexity, 
while $S_{\mathrm{iter}}$ denotes the per-iteration memory complexity.
\end{proposition}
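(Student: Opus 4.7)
The plan is to aggregate the per-TR cost bounds established in Lemmas~\ref{lem:gp-cost-app} and~\ref{lem:acq-cost-app} across the $M_{\mathrm{TR}}$ trust regions. First, I would observe that a single TuRBO iteration visits every TR exactly once and performs two sequential tasks per TR: refitting the local GP surrogate, and then running $R_{\mathrm{acq}}$ random restarts of the acquisition maximizer. Because these two stages occur in order, the per-TR time cost is additive, while the per-TR memory cost is the maximum of the two footprints, which is dominated by the GP storage term $\mathcal{O}(n_j^2)$.

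Next, I would directly invoke the two lemmas. Lemma~\ref{lem:gp-cost-app} yields $\mathcal{O}(n_j^3)$ time and $\mathcal{O}(n_j^2)$ memory for the surrogate refit in TR $j$, while Lemma~\ref{lem:acq-cost-app} yields $\mathcal{O}(R_{\mathrm{acq}} I(d)\, n_j^2)$ time for the subsequent acquisition step. Summing over $j = 1,\dots,M_{\mathrm{TR}}$ gives the non-uniform expressions
\begin{equation*}
T_{\mathrm{iter}} = \mathcal{O}\!\left(\sum_{j=1}^{M_{\mathrm{TR}}} \left(n_j^3 + R_{\mathrm{acq}} I(d)\, n_j^2\right)\right), \quad S_{\mathrm{iter}} = \mathcal{O}\!\left(\sum_{j=1}^{M_{\mathrm{TR}}} n_j^2\right).
\end{equation*}
The simplified bound stated in the proposition then follows by applying the uniform cap $n_j \le n_{\max}$ to each summand and collapsing every sum into a factor of $M_{\mathrm{TR}}$, replacing $n_j$ by $n_{\max}$ throughout.

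The main obstacle, and essentially the only non-mechanical step, is justifying that each acquisition evaluation inside the inner optimizer really scales as $\mathcal{O}(n_j^2)$ rather than $\mathcal{O}(n_j^3)$. This hinges on caching the Cholesky factor produced during the GP-training stage so that all subsequent posterior mean and variance queries performed across the $R_{\mathrm{acq}}$ restarts reduce to triangular back-substitutions against a fixed factor. Without this caching, the acquisition stage would re-incur an $\mathcal{O}(n_j^3)$ cost per evaluation and absorb the GP-training term, destroying the clean additive decomposition claimed in the proposition. I would therefore make this caching convention explicit, consistent with the standard GP implementation practice invoked in Lemma~\ref{lem:acq-cost-app}, and close by noting that incidental bookkeeping such as trust-region side-length updates and incumbent maintenance is $\mathcal{O}(d)$ per iteration and is dominated by both reported terms.
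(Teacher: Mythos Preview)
Your proposal is correct and matches the paper's approach: the paper does not give a separate proof environment for this proposition, treating it as an immediate consequence of Lemmas~\ref{lem:gp-cost-app} and~\ref{lem:acq-cost-app} by summing the per-TR costs over $j=1,\dots,M_{\mathrm{TR}}$ and then applying the cap $n_j\le n_{\max}$. Your additional remarks on Cholesky caching (already handled inside Lemma~\ref{lem:acq-cost-app}'s proof) and on $\mathcal{O}(d)$ bookkeeping are sound but go slightly beyond what the paper records.
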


\begin{comment}
\paragraph{Note (scalable surrogates).}
If inducing-point GPs of rank $m$ are used, per-TR time is 
$\mathcal{O}(n_j m^2 + m^3)$ and memory $\mathcal{O}(n_j m + m^2)$.    
\end{comment}

\subsection{Evaluation complexity to $\varepsilon$-stationarity}

To establish the overall complexity of TuRBO, we complement the per-iteration cost analysis 
with an evaluation complexity analysis that bounds the number of QPU evaluations required to 
reach an $\varepsilon$-stationary point. Our assumptions follow standard TR theory 
and GP BO analyses~\cite{eriksson2019scalable, davis2022gradient, Srinivas_2012, conn2000trust}:

\begin{itemize}
\item Local smoothness: $E(\boldsymbol{s})$ is continuously differentiable with an $L$-Lipschitz gradient.
\item Model accuracy: With probability $1-\delta$, 
$|E(\boldsymbol{s})-\mu_t(\boldsymbol{s})|\le \sqrt{\beta_t}\,\sigma_t(\boldsymbol{s})$ for all $\boldsymbol{s}$ in the TR,
as in GP upper confidence bound analyses~\cite{Srinivas_2012}.
\item Sufficient decrease: Acquisition maximization yields an expected decrease proportional 
to the Cauchy step whenever $\|\nabla E(\boldsymbol{s}_t)\|>\varepsilon$, 
consistent with standard trust-region sufficient decrease conditions~\cite{conn2000trust,eriksson2019scalable}.
\end{itemize}

\begin{proposition}[Evaluation complexity]
\label{prop:eval-app}
Under (A1)-(A3), TuRBO (single-point) finds $\boldsymbol{s}$ with $\|\nabla E(\boldsymbol{s})\|\le \varepsilon$ within
\begin{equation}
\mathcal{O}\!\big(\varepsilon^{-2} + \log(\ell_0/\ell_{\min})\big)
\end{equation}
QPU evaluations, up to logarithmic factors in $1/\varepsilon$.
\end{proposition}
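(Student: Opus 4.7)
The plan is to follow the classical trust region complexity template, adapted to the probabilistic setting of BO with a GP surrogate. The key idea is to partition the iterations into successful and unsuccessful steps and to bound each class separately, while controlling surrogate mis-specification via a uniform high-probability event.

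First I would fix a total evaluation budget $N_{\mathrm{tot}}$ and apply (A2) together with a union bound to establish that, with probability at least $1-N_{\mathrm{tot}}\delta$, the confidence-interval inequality $|E(\boldsymbol{s})-\mu_t(\boldsymbol{s})|\le \sqrt{\beta_t}\,\sigma_t(\boldsymbol{s})$ holds simultaneously for every queried point across every iteration. Taking $\beta_t$ to grow as $\mathcal{O}(\log(t/\delta))$ in the spirit of Srinivas et al. ensures that any additional factors introduced by this union bound are absorbed into the logarithmic overhead in $1/\varepsilon$ promised by the statement.

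Next, on this high-probability event I would classify each iteration as \emph{successful} (the new candidate is accepted and the incumbent $f^*$ strictly decreases) or \emph{unsuccessful} (the trust region shrinks by factor $\rho<1$). For a successful iteration $t$ at which $\|\nabla E(\boldsymbol{s}_t)\|>\varepsilon$, combining the Cauchy-type sufficient decrease of (A3) with the Lipschitz-gradient bound of (A1) yields a per-step decrease of the form
\begin{equation}
E(\boldsymbol{s}_t)-E(\boldsymbol{s}_{t+1}) \;\ge\; c\,\min\!\bigl\{\varepsilon\,\ell_t,\ \varepsilon^2/L\bigr\},
\end{equation}
where $\ell_t$ denotes the current TR side length. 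Since $E$ is bounded below on the compact feasible set $\mathcal{S}$, telescoping this inequality caps the number of successful iterations before reaching an $\varepsilon$-stationary point by $\mathcal{O}(\varepsilon^{-2})$, in direct analogy with the Conn--Gould--Toint analysis. For the unsuccessful iterations I would exploit the TR dynamics: each failed step contracts $\ell_t$ by $\rho$, each success enlarges it by at most $\rho^{-1}$, and the radius is clipped to $[\ell_{\min},\ell_0]$ with a restart whenever $\ell_t<\ell_{\min}$. Amortizing contractions against expansions shows that, between any two successes or restarts, the number of consecutive failures is bounded by $\log_{1/\rho}(\ell_0/\ell_{\min})$, contributing the additive term $\mathcal{O}(\log(\ell_0/\ell_{\min}))$.

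The main obstacle will be bridging the deterministic TR machinery with the stochastic QPU oracle: in the classical analysis, both sufficient decrease and TR updates are deterministic, whereas here the surrogate predictions and the observed $E(\boldsymbol{s})=\min_r H(\boldsymbol{x}^{(r)};\boldsymbol{s})$ are random. I would address this by restricting all sufficient-decrease arguments to the high-probability event on which (A2) holds uniformly, translating ``decrease of the model'' into ``decrease of the true objective'' up to an $\mathcal{O}(\sqrt{\beta_t}\,\sigma_t)$ slack, and choosing the number of reads $R$ large enough that this slack is $o(\varepsilon^2)$. The residual polylogarithmic factors in $\beta_t$ and $\delta$ are precisely what produces the ``up to logarithmic factors in $1/\varepsilon$'' qualifier, and summing the two iteration classes gives the claimed bound $\mathcal{O}(\varepsilon^{-2}+\log(\ell_0/\ell_{\min}))$.
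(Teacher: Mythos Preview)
Your proposal follows essentially the same approach as the paper's proof: partition iterations into successful and unsuccessful, use the Cauchy-type sufficient decrease from (A3) together with the Lipschitz bound from (A1) to cap the number of successes at $\mathcal{O}(\varepsilon^{-2})$, bound the shrinks via the geometric trust-region dynamics at $\mathcal{O}(\log(\ell_0/\ell_{\min}))$, and attribute the residual $\log(1/\varepsilon)$ factors to the model-confidence terms $\beta_t$. The paper's argument is a terse four-line sketch, whereas you spell out the union-bound treatment of (A2) and the handling of the stochastic QPU oracle more explicitly, but the underlying template is identical.
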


\begin{proof}
Trust-region theory shows each successful step reduces $E$ by at least $\mathcal{O}(\varepsilon^2/L)$ until 
$\|\nabla E\|\le \varepsilon$. Thus $\mathcal{O}(1/\varepsilon^2)$ successes suffice. The number of failed steps 
before shrinkage is bounded in expectation; total shrinks is $\mathcal{O}(\log(\ell_0/\ell_{\min}))$. 
Each step uses one evaluation. Extra $\log(1/\varepsilon)$ factors come from model-confidence terms $\beta_t$.
\end{proof}

Here, $\ell_0$ denotes the initial TR radius and $\ell_{\min}$ the minimum TR radius. 
The term $\log(\ell_0/\ell_{\min})$ accounts for the maximum number of shrink operations when the TR is repeatedly reduced from its initial to its minimum allowable size.

Proposition~\ref{prop:per-iter-app} gives the per-iteration complexity, while Proposition~\ref{prop:eval-app} gives the evaluation complexity, together establishing Theorem~\ref{thm:turbo-complexity}.

\section{Scailability for Clique Embeddings on Zephyr Topology}\label{app:chain-scaling}
\begin{table*}[t]
\centering
\caption{Notation used for clique embedding on Zephyr topology.}
\label{tab:notation}
\begin{tabular}{c l}
\hline
Symbol & Description \\
\hline
$N$ & Problem size (e.g., number of customer cities in TSP) \\
$L$ & Number of logical variables (QUBO size); for TSP, $L=(N+1)^2$ \\
$K_L$ & Complete graph on $L$ vertices (logical interaction graph) \\
$V_{\mathrm{phys}}$ & Set of physical qubits in Zephyr topology \\
$Q = |V_{\mathrm{phys}}|$ & Total number of physical qubits available on hardware (capacity) \\
$\Delta(G_{\mathrm{Z}})$ & Maximum degree of Zephyr hardware graph ($=20$ for $t=4$) \\
$\phi(v)$ & Chain (connected subgraph of $V_{\mathrm{phys}}$) representing logical variable $v$ \\
$\ell_v = |\phi(v)|$ & Chain length of logical variable $v$ \\
$\ell(L) = \tfrac{1}{L}\sum_{v=1}^L \ell_v$ & Mean chain length for clique embedding of $K_L$ \\
$Q_{\mathrm{phys}}(L)$ & Total number of physical qubits used in embedding \\
& $Q_{\mathrm{phys}}(L) = \sum_{v=1}^L \ell_v = L\cdot \ell(L)$ \\
$w(G)$ & Bisection width (minimum number of crossing edges in balanced cut of graph $G$) \\
\hline
\end{tabular}
\end{table*}
For clarity, all notation used in this section is summarized in Table~\ref{tab:notation}.
The Zephyr graph $Z_m$ with tile parameter $t$ and grid parameter $m$ contains 
\begin{equation}
|V_{\mathrm{phys}}| = 8tm^2 + 4tm
\end{equation}
physical qubits in total (the hardware capacity), 
and
\begin{equation}
\Delta(G_{\mathrm{Z}}) = 4t+4,
\end{equation}
which is the maximum degree (the largest number of couplers incident to a single qubit). 
For the standard case $t=4$, this gives $\Delta(G_{\mathrm{Z}})=20$~\cite{Boothby2021Zephyr}.

When embedding a logical clique $K_L$ into $Z_m$, 
each logical variable $v\in V_{\mathrm{log}}$ is mapped to a connected subgraph $\phi(v)\subseteq V_{\mathrm{phys}}$, 
called a \textit{chain}. Denote the chain length of $v$ as $\ell_v=|\phi(v)|$ and the mean chain length as
\begin{equation}
\ell(L) = \frac{1}{L}\sum_{v=1}^L \ell_v.
\end{equation}
The total number of physical qubits used by the embedding is
\begin{equation}
Q_{\mathrm{phys}}(L) = \sum_{v=1}^L \ell_v = L \cdot \ell(L).
\label{eq:qphys-def}
\end{equation}

\subsection{Lower Bound for Chain Length}

The bisection width of the complete graph $K_L$ is
\begin{equation}
w(K_L) = \Theta(L^2),
\end{equation}
meaning that any balanced cut separates $\Theta(L^2)$ edges.  
On the other hand, Zephyr is a sparse quasi-2D bounded-degree topology. 
As a consequence of separator theorems for grid-like graphs~\cite{tarjan1985decomposition}, any balanced separator in $Z_m$ has capacity
\begin{equation}
w(Z_m) = \mathcal{O}(\sqrt{|V_{\mathrm{phys}}|}) = \mathcal{O}(\sqrt{Q}).
\end{equation}
To realize $\Theta(L^2)$ crossing edges of $K_L$ across a separator of capacity $\mathcal{O}(\sqrt{Q})$, the average path length required is at least~\cite{leighton1999multicommodity}
\begin{equation}
\Omega\!\left(\frac{L^2}{\sqrt{Q}}\right).
\end{equation}
With bounded degree, each logical chain must span this routing radius. 
Hence the mean chain length satisfies
\begin{equation}
\ell(L) \;\ge\; c_1 \frac{L^2}{\sqrt{Q}}
\end{equation}
for some constant $c_1>0$.  
Using \eqref{eq:qphys-def}, i.e., $Q=\Theta(L\cdot \ell(L))$, we obtain
\begin{equation}
\ell(L)^{3/2} \;\ge\; c_1 L^{3/2}.
\end{equation}
Taking both sides to the power of $2/3$ yields
\begin{equation}
\ell(L) \;\ge\; c_2 \sqrt{L},
\end{equation}
where $c_2 = c_1^{2/3} > 0$. 
Thus we conclude that $\ell(L)=\Omega(\sqrt{L})$.

\subsection{Upper Bound for Chain Length}

We give a constructive upper bound on the mean chain length by using the explicit clique embeddings provided for the Zephyr topology~\cite{Boothby2021Zephyr}. 
Let $Z_{m,t}$ denote a Zephyr graph with grid parameter $m$ and tile parameter $t$ (the standard case uses $t{=}4$). 
The report shows two constructive embedding templates:
\begin{itemize}
\item Standard Clique Embedding: $K_{16m-8}$ can be embedded in $Z_{m,4}$ with \emph{chain length} exactly $m$;
\item Extended Clique Embedding: $K_{16m+1}$ can be embedded in $Z_{m,4}$ with \emph{chain length} at most $2m$.
\end{itemize}
(See \cite{Boothby2021Zephyr}, Sections~2.2 and~5.1 for the explicit constructions.)

Let $L$ denote the clique size to be embedded (i.e., the number of logical variables). 
From the standard clique embedding, if $L\le 16m{-}8$, then $m \ge \frac{L+8}{16}$ and the chain length per logical variable is $m$. 
From the extended clique embedding, if $L\le 16m{+}1$, then $m \ge \frac{L-1}{16}$ and the chain length per logical variable is at most $2m$.
Therefore, for all feasible $L$, we have the uniform (constructive) upper bound
\begin{equation}
\ell(L) \;\le\; 2m \;\le\; \frac{L+8}{8} \;=\; \mathcal{O}(L),
\label{eq:clique-upper-L}
\end{equation}
and in particular $\ell(L)\le m$ when the standard clique embedding applies. Hence, the mean chain length scales \emph{linearly} in the clique size $L$ on Zephyr.

\subsection{Equivalent form in terms of hardware size}
The number of physical qubits of $Z_{m,t}$ is $|V_{\mathrm{phys}}| = 8tm^2 + 4tm = \Theta(m^2)$ (with maximum degree $4t{+}4$, i.e., $20$ for $t{=}4$). 
Thus $m=\Theta(\sqrt{Q})$ where $Q:=|V_{\mathrm{phys}}|$. 
Under the clique constructions above, both the chain length and the maximum embeddable clique size scale as $m$, i.e.,
\begin{equation}
\ell(L) \;=\; \Theta(\sqrt{Q}),\qquad L \;=\; \Theta(\sqrt{Q}).
\label{eq:clique-upper-Q}
\end{equation}
Eliminating $\sqrt{Q}$ between the two relations recovers \eqref{eq:clique-upper-L}, i.e., $\ell(L)=\Theta(L)$ in the capacity-saturating regime.

\begin{remark}[Tightness for Chain Length]
The separator-based lower bound shows $\ell(L)\!=\!\Omega\!\big(L^2/\sqrt{Q}\big)$; 
combining with $Q{=}\Theta(m^2)$ and $L{=}\Theta(m)$ on Zephyr yields $\ell(L){=}\Omega(m){=}\Omega(L)$ (hence also $\Omega(\sqrt{L})$), 
which matches the constructive upper bound \eqref{eq:clique-upper-L} up to constants. 
Therefore, for clique embeddings on Zephyr, the mean chain length scaling is tight: $\ell(L)=\Theta(L)$ as a function of $L$, and equivalently $\ell(L)=\Theta(\sqrt{Q})$ as a function of the hardware size $Q$.    
\end{remark}

\subsubsection*{Experiments Analysis}
Table~\ref{tab:clique-embed} empirically illustrates the clique embedding behavior for TSP QUBOs on Zephyr. 
Here $N$ denotes the number of customer cities, leading to $L=(N{+}1)^2$ logical variables in the assignment formulation, 
with each logical variable represented by a chain of physical qubits. 
The table reports three quantities: the number of logical variables $L$, the number of physical qubits used $Q_{\mathrm{phys}}(L)$, 
and the mean chain length $\ell(L)$. 

The data confirm that both $\ell(L)$ and $Q_{\mathrm{phys}}(L)$ grow sharply with $N$. 
For instance, the mean chain length increases from $1.0$ at $N{=}1$ ($L=4$) to $11.5$ at $N{=}9$ ($L=100$), 
while the physical-qubit footprint rises from $4$ to $1151$ over the same range. 
These values fall between the theoretical bounds: the separator-based lower bound $\ell(L)=\Omega(\sqrt{L})$ 
and the constructive upper bound $\ell(L)=\Theta(L)$. 
At small $N$, the growth of $\ell(L)$ is closer to the $\sqrt{L}$ scaling, but as $N$ increases it approaches the linear regime. 

Consequently, the total embedding cost $Q_{\mathrm{phys}}(L)=L\cdot\ell(L)$ scales between $\Theta(L^{3/2})$ and $\Theta(L^2)$. 
The observed values show that clique embeddings on Zephyr rapidly consume physical qubits: 
already at $N=9$, more than one quarter of the available 4800 qubits are required. 
This sharp growth highlights the scalability bottleneck of clique embeddings: as $N$ increases, longer chains amplify hardware noise and chain-break probability, which directly degrades solution quality in practice.

\end{document}